\providecommand{\U}[1]{\protect\rule{.1in}{.1in}}
\DeclareMathAlphabet{\pazocal}{OMS}{zplm}{m}{n}
\newtheorem{theorem}{Theorem}[section]
\newtheorem{definition}[theorem]{Definition}
\newtheorem{lemma}[theorem]{Lemma}
\newtheorem{proposition}[theorem]{Proposition}
\newtheorem{remark}[theorem]{Remark}
\numberwithin{equation}{section}
\numberwithin{theorem}{section}
\newcommand{\qed}{\hfill$\Box$}
\newenvironment{proof}{\begin{trivlist}\item[]{\em Proof:}\/}{\qed\end{trivlist}}
\newcommand{\R}{{\mathbb R}}
\newcommand{\N}{{\mathbb N}}
\DeclareMathOperator*{\supp}{supp}
\newcommand{\beq}{\begin{equation}}
\newcommand{\eeq}{\end{equation}}
\newcommand{\beqs}{\begin{eqnarray}}
\newcommand{\eeqs}{\end{eqnarray}}
\newcommand{\norm}[1]{\Vert #1\Vert}
\newcounter{jlisti}
\begin{document}

\title{Coagulation equations for aerosol dynamics}
\author{Marina A. Ferreira}
\maketitle

\begin{abstract}
Binary coagulation is an important process in aerosol dynamics by which two particles merge to form a larger one. The distribution of particle sizes over time may be described by the so-called Smoluchowski's coagulation equation. This integrodifferential equation exhibits complex non-local behaviour that strongly depends on the coagulation rate considered.
We first discuss well-posedness results for the Smoluchowski's equation for a large class of coagulation kernels  as well as the existence  and nonexistence of stationary solutions in the presence of a source of small particles. The existence result uses  Schauder fixed point theorem, and the nonexistence result relies on a flux formulation of the problem and on power law estimates for the decay of stationary solutions with a constant flux. We then consider a more general setting. We consider that particles may be constituted by different chemicals, which leads to multi-component equations describing the distribution of particle compositions. We obtain explicit solutions in the simplest case where the coagulation kernel is constant by using a generating function. Using an approximation of the solution we observe that the mass localizes along a straight line in the size space for large times and large sizes. 
\end{abstract}

\tableofcontents

\bigskip

\bigskip 

\section{Introduction}

We consider particle systems where moving particles undergo binary coagulation, forming larger particles. This simple system can be used to study the dynamics of aerosols in the atmosphere \cite{Fried} as well as raindrop formation, smoke, sprays and galaxies~\cite{Aldous1999, Elimelech1995BOOK}. 

When the number of particles is  very large it becomes more relevant to study the collective behaviour of the particles rather than individual particle behaviour. This motivates a statistical description of the system. In 1916 Smoluchowski \cite{S16} proposed an equation to describe the particle size distribution over time, assuming that the system is homogeneous in space.

 Let $f(x,t)$ be the number density of particles of size $x>0$ at time $t \geq 0$. The Smoluchowski's coagulation equation, or simply coagulation equation, is the following mean-field equation for the evolution of $f$
\begin{equation}\label{eq:smol}
\partial_t f(x,t) = \frac{1}{2}\int_{0}^x K(x-y,y)f(x-y,t)f(y,t)dy - \int_0^\infty K(x,y)f(x,t)f(y,t)dy
\end{equation} 
where $K(x,y)$ is the coagulation rate between  particles of size $x$ and $y$.  The first term on the right hand-side is the gain term due to the coagulation between particles of size $x-y$ and particles of size $y$ to create a particle of size $x$. The second term is the loss term which describes the loss of particles of size $x$ by merging with any other particle in the system. 
Equation \eqref{eq:smol} is an integrodifferential equation belonging to the class of kinetic equations. 

 We also consider more general systems where a constant input of particles may be present. 
The number density in this case satisfies the coagulation equation with an extra source term $\eta \geq 0$, 
\begin{equation}\label{eq:smol_source}
\partial_t f(x,t) = \frac{1}{2}\int_{0}^x K(x-y,y)f(x-y,t)f(y,t)dy - \int_0^\infty K(x,y)f(x,t)f(y,t)dy + \eta(x).
\end{equation}

Complementary research lines have expanded over the last decades on experimental, numerical and theoretical aspects of equations \eqref{eq:smol} and \eqref{eq:smol_source}. 
Algorithms to simulate these equations  have been developed to test hypotheses drawn from atmospheric data \cite{LK2003, Vehkam} (see \cite{Lee} for a survey on numerical methods). On the other hand, theoretical results have clarified issues mainly related to equation \eqref{eq:smol}, such as  existence and uniqueness of solutions  for general classes of kernels \cite{FL06, Norris} or the behaviour of  solutions for explicitly solvable kernels \cite{MP04} and general kernels \cite{BNV18, BNV19, EM05,FL05}.

A particle may be characterized not only by its size but also by its composition, leading to \textit{multi-component coagulation equations} where the size is described by a vector $x \in \R_+^d\backslash \{ 0\}$  representing the size of each of the chemical components of a particle. An application of multi-component equations to aerosol dynamics is described in Section \ref{sec:aerosol}. 

In this paper we review analytic results related to the  one-component equations \eqref{eq:smol} and \eqref{eq:smol_source} as well as to the corresponding  discrete  multi-component equations with $x, y \in \N^d\backslash \{ 0\}$. 
We start in Section \ref{sec:prelimi} with a short overview on various topics related to properties of the solutions, applications and derivation from particle systems. We also introduce some notation that is used throughout the Chapter.
In Section \ref{sec:onecomponent} we study the one-component equations \eqref{eq:smol} and \eqref{eq:smol_source}. 
Section \ref{sec:well-posedness} contains the main steps of the proof of one of the first well-posedness results for equation \eqref{eq:smol} with unbounded coagulation kernels obtained in 1999 by Norris \cite{Norris}. Section \ref{sec:stationary} contains a review of the proofs of existence and non-existence of stationary solutions to coagulation equations to \eqref{eq:smol_source}, obtained recently in \cite{FLNV19}.
In Section \ref{sec:multi} we consider the discrete multi-component equation with constant kernel. Following the computations presented in \cite{KBN},  we compute in Section \ref{sec:multi_time_dep} explicit time-dependent solutions 
and in Section \ref{sec:multi_stat} we compute stationary solutions when an additional source at the monomers is present. We also obtain approximations of both solutions showing explicitly that mass localizes along a straight line in the multi-dimensional size space for large times and large sizes. Finally in Section \ref{sec:perspectives} we mention some recent results in the literature and open questions.

\section{Preliminaries}\label{sec:prelimi}

\subsection{Conservation of mass and continuity equation}

By multiplying \eqref{eq:smol} by $x$ and integrating in $x$ from $0$ to $\infty$ one obtains formally an equation for the mass $M_1(t) = \int_0^\infty xf(x,t)dx$ given by $\frac{d}{dt} M_1(t) = 0 $. This shows that the mass is conserved, provided  the integrals are well-defined. Associated to the mass-conservation, one may write a continuity equation that shows that  mass is transported continuously along the size space: 
\begin{equation}\label{eq:continuity_eq}
\partial_t( xf(x,t)) = \partial_x J(x,t)
\end{equation}
where the flux of mass  from small to large clusters is given by
\begin{equation}
J(x,t) = \int_0^x \int_{x-y}^\infty yf(y,t) f(z,t)K(y,z) dzdy.
\end{equation}
As we will see in Section \ref{sec:exist} a (non-equilibrium) stationary solution has a constant flux of mass at large sizes, i.e., $J(x)$ is constant for all  $x>L$, for some positive $L$.  Moreover this flux plays an important role in the proof of non-existence of stationary solutions in Section \ref{sec:non-exist}.

Interestingly, if the coagulation rate  behaves like a power law and if the power  is sufficiently large, then  mass-conservation is lost.
Such phenomenon is  called gelation and it corresponds to the formation of infinitely large clusters that are not seen any more by the equation. Therefore these big clusters leave the system and the total mass decreases.
 Gelation may be interpreted as a change in state from gas to gel.   Mathematically, this phenomenon poses interesting challenges \cite{EMP02}. Since gelation has not been observed in atmospheric aerosols we do not discuss it further here.

We note that, contrarily to the Boltzmann equation, the coagulation equation does not preserve number of particles, due to the sticky collisions.

\subsection{Coagulation kernels for aerosols in the atmosphere}\label{sec:aerosol}

Atmospheric aerosols are suspensions of small particles in the air, whose diameter ranges approximately between $1$ nanometre, in the case of molecular particles, to $100$ micrometres, in the case of cloud droplets and dust particles \cite{Fried}.  
Aerosols influence sunlight scattering by reflecting and absorbing radiation, and they constitute the seeds that originate the clouds. Therefore, they play an important role in weather and climate forecast \cite{Carslaw2013}.

Aerosols are subject to complex processes that influence their size distribution over time. One important process is the coagulation of particles to produce larger ones. Other processes include the formation of new small particles, or monomers, due to certain physical and chemical processes, the removal of  particles due to gravity or diffusion, and the growth/shrinkage due to condensation/evaporation \cite{LK2003}. 
Atmospheric aerosols are typically constituted by different chemicals, leading to multi-component systems, which may alter the rate of the processes mentioned before and consequently, the particle size distribution \cite{Vehkam}.  

We consider the regime in which the particles are uniformly distributed in space. Moreover, we assume that removal and  growth of particles due to condensation is not important, which in practice may correspond roughly to sizes between $10$ nanometers and $10 $ micrometers \cite{Fried}. 
We are then led to the study of multi-component systems where particles undergo binary coagulation in the presence of a source of small particles. 

Coagulation kernels $K$ have been derived for atmospheric aerosols using kinetic theory under several assumptions on the shape and motion of  particles \cite{Fried}. Aerosol particles are commonly assumed to be spherical and to undergo elastic collisions with background air particles. The number of such collision events is assumed to be  much larger than the number of collisions between two coalescing particles. This drives the system towards an equilibrium where the particle velocities follow a Maxwell-Boltzmann distribution. 

 Moreover, any collision between coalescing particles yields a coalescing particle.
 Two different coagulation kernels have been derived under the previous conditions for two different  regimes. Each regime is defined based on the relation between particle size and the average distance travelled by a particle between two collisions in air, called mean free path.
 Under normal pressure and temperature conditions, the mean free path in  air, $\ell$, is of the order of $ 10$ nanometres. 
If the size (diameter) of a spherical particle, $x$, is much smaller than the mean free path $x \ll \ell$, the particle is more  likely to travel in straight lines before meeting another coalescing particle. In this case the rate of coagulation has been estimated by the free molecular coagulation kernel:
\begin{equation}\label{eq:free_mol_kernel}
K(x,y) = (x^{1/3}+y^{1/3})^2(x^{-1}+y^{-1})^{1/2}.
\end{equation}
Otherwise, if the size of a particle is much larger than the mean free path, $x \gg \ell$,  the coalescing particle will meet many background air particles before meeting another coalescing particle. In this case,  the air behaves like a fluid and the coalescing particle is more likely to diffuse. The coagulation rate has been estimated by the diffusive coagulation kernel:
\begin{equation}\label{eq:diff_kernel}
K(x,y) = (x^{-1/3}+y^{-1/3})(x^{1/3}+y^{1/3}).
\end{equation}
This kernel was first derived in the original work by Smoluchowski \cite{S16}. 
Other kernels have been derived under different assumptions on the underlying background gas and particles, such as particles moving in a laminar shear or turbulent flow \cite{Fried}, and particles having electric charges \cite{SB73, Vehkam}.

The behaviour and even the existence of solutions to  equation \eqref{eq:smol_source} strongly depends  on the coagulation kernel. 
In Sections \ref{sec:exist} and \ref{sec:non-exist} we review the existence of stationary solutions for a large class of kernels which includes in particular the free molecular \eqref{eq:free_mol_kernel} and the diffusive kernels \eqref{eq:diff_kernel}.

\subsection{From particle models to  Smoluchowski's coagulation equation }

The Smoluchowski's coagulation equation has been rigorously derived using  different approaches that consider different types of particle systems. In one approach,  a purely stochastic particle system is considered, where pairs of particles are randomly picked to originate a new particle. The associated stochastic process is usually called Marcus-Lushnikov process. A different approach considers deterministic particle systems, where particles move and when they collide they merge with a certain probability.

The first approach is inspired in Kac-models for the derivation of the Boltzmann equation \cite{FG04}. A common strategy is to start from an infinite stochastic particle system where particles of size $x$ and $y$ coalesce at a rate $K(x,y)$  and to prove that the number density, after being conveniently rescaled,  converges, as the unit volume tends to infinity, to a measure that solves the Smoluchowski's coagulation equation with kernel $K$. This has been obtained for the additive kernel, product kernel as well as for a class of sub-multiplicative kernels using combinatorial techniques and random graphs. See \cite{B06} (Chapter 5.2) for an accessible exposition and \cite{Aldous1999} for a review on existing results and open problems.

In the second approach, there are fewer rigorous results. The first result to the best of our knowledge is due to Lang and Xanh \cite{LX80}. They consider  Brownian particles moving in the three-dimensional Euclidean space according to Brownian motion with a diffusion coefficient $D$. The particles are assumed to move independently on each other provided they are at a distance greater than the sum of their radius $2R$. Once they come closer than $2R$ they coalesce with probability $1/2$, forming one Brownian particle with the same radius $R$ and the same diffusion coefficient $D$. In the limit when the number of particles $N$ goes to infinity and the radius $R$ goes to zero, such that $RN$ remains constant, the authors prove propagation of chaos and that the density function converges in probability to the solution to the Smoluchowski's coagulation equation with constant coagulation kernel. The limit where $RN$ remains constant is the so-called Boltzmann-Grad limit and  is the limit of constant mean free time.
 A more general case of coalescing Brownian particles with  diffusion coefficients changing after coalescence, but not the size $R$, has been treated in \cite{HR07}. More recently, the change in size after coalescence has been considered in \cite{NV17} (see also \cite{NNTV})  in the case of a  tracer particle moving in a straight line and coalescing with randomly distributed fixed particles of different sizes. In this case, a linear coagulation equation with a simple shear kernel was derived in the kinetic limit where the volume fraction filled by the background of particles tends to zero.

\subsection{Notation}

We  use the notation $\R_* := (0,\infty)$, $\R_+ := [0,\infty)$ and $\N_+ = \{ 0,1,...\}$. We denote the $\ell_1- $ norm in $\R^d$ by $|\alpha| = \sum_{i=1,...,d} |\alpha_i|,\ \alpha \in \R^d$.
We denote by $\mathcal{M}(I)$ the space of signed Radon measures supported on $I \subset \R_+$, i.e., the non-negative measures having finite total variation on any compact subset of $I$, and by  $\| \cdot \|$ the total variation norm. We denote by $\mathcal{M}_+(I)$ the space of measures on $\mathcal{M}(I)$ that are nonnegative. The  measures from $\mathcal{M}_+(I)$ that are also bounded are denoted by $\mathcal{M}_{+,b}(I):=\{\mu\in \mathcal{M}_{+}(I) \,|\,\mu(I)<\infty\}$. The space $\mathcal{M}_{+,b}(I)$ equipped with the norm $\| \cdot \|$ is a Banach space. The notation $f_t(x)$ will sometimes be used to denote $f(t,x)$. We denote by $C_{c}(I )$ or $C_b(I)$ the spaces of  continuous functions  on $I$ that are compactly supported  or  bounded, respectively.
For simplicity,  we use a generic constant $C>0$ which may change from line to line.

\bigskip

\section{One-component equation}
\label{sec:onecomponent}

\subsection{Main results}

We consider kernels $K:(0,\infty)^2 \to [0,\infty)$ satisfying
\begin{eqnarray}
K(x,y )=K(y,x),\  K(x,y)\geq 0,\label{eq:K_sym}\\
K(x,y) \geq c_1 [x^{\lambda +\gamma} y^{-\lambda}+ y^{\lambda +\gamma} x^{-\lambda}],\label{eq:K_growth1}\\
K(x,y) \leq c_2 [x^{\lambda +\gamma} y^{-\lambda}+ y^{\lambda +\gamma} x^{-\lambda}],
\label{eq:K_growth2}\\
0<c_1 \leq c_2 < \infty, \label{eq:K_c}\\
\text{ and } \lambda,\ \gamma \in \R,\label{eq:K_lam_gam}
\end{eqnarray}
for some given constants $c_1, c_2, \lambda$ and $\gamma$ and for all $(x,y)\in (0,\infty)^2$.
This class includes in particular the physical kernels   \eqref{eq:free_mol_kernel} and \eqref{eq:diff_kernel}.
The parameter $\gamma$ represents the homogeneity of the kernel, while $\lambda$ represents the "off-diagonal" rate. The parameter $\gamma$ yields the behaviour under the scaling of the particle size, while $\lambda$ measures the importance of collisions between particles of different sizes. Note that the bounds in \eqref{eq:K_growth1} and \eqref{eq:K_growth2} are homogeneous, i.e., they satisfy for any $k>0$, $h(kx,ky) = k^\gamma h(x,y)$, but  the kernels are not necessarily  homogeneous. 

We assume the following condition on the source  $\eta \in \mathcal{M}_+(\R_*)$ 
\begin{equation}\label{eq:cond_eta}
\supp \eta \in [1,L],\ \text{ for some } L>1. 
\end{equation}
Note that then the source is bounded, i.e., $\eta(\R_*) < \infty$.

 We consider the following definition of time-dependent solution to the Smoluchowski's coagulation equation \eqref{eq:smol} \cite{Norris}.

\begin{definition}\label{def:time-dep}
Assume that $K$ is a measurable function satisfying \eqref{eq:K_sym} and \eqref{eq:K_growth2}. 
We will say that the map $t \mapsto f_t;\ [0,T) \to \mathcal{M}_+(\R_*)$, where $T \in (0,\infty]$ is a local solution to \eqref{eq:smol} if it satisfies
\begin{enumerate}
\item for all compact sets $B \subset \R_*$, the map $t \mapsto f_t(B);\ [0,T) \to [0,\infty)$ is measurable
\item for all $t <T$ and all compact sets $B \subset \R_*$ 
$$\int_0^t \int_{B \times \R_*} K(x,y) f_s(dx)f_s(dy) ds<\infty,$$ 
\item  for all bounded measurable functions $\varphi$ of compact support  and   $t<T$ it holds
\begin{equation}\label{eq:timeDep}
\langle \varphi,f_t\rangle = \langle \varphi, f_0 \rangle + \int_0^t \langle \varphi, L(f_s) \rangle ds
\end{equation}
where $L(f)$ is defined by 
$$\langle \varphi,L(f)\rangle= \frac{1}{2} \int_{\R_*} \int_{ \R_*} K(x,y) [ \varphi(x+y) - \varphi(x) - \varphi(y)] f(dx) f(dy),$$
\item $\int_{\R_*} x \mathbbm{1}_{x \leq 1} f_0(dx) < \infty$ and \eqref{eq:timeDep} holds with $\varphi(x) = x \mathbbm{1}_{x \leq 1}$. 
\end{enumerate}
If $T = \infty$ we call time-dependent solution to \eqref{eq:smol}. 
\end{definition}

One can easily check that  condition $2$ is the minimal one to have well-defined integrals.
 Condition $3$ is the weak formulation commonly used in the literature and it is obtained by formally multiplying \eqref{eq:smol} by a test function and integrating in $x$. Condition $4$ is a boundary condition imposing that no mass enters at $0$. 

The existence and uniqueness  of a time-dependent solution to \eqref{eq:smol} for sublinear kernels is established in the next Theorem \cite{Norris}. Similar results have  also been proven in \cite{Norris} for kernels satisfying \eqref{eq:K_growth1}-\eqref{eq:K_growth2} with $\gamma+\lambda = -\lambda $ and $\lambda >- 1/2$.

\begin{theorem}\label{thm:well-posed}
Let $K$ be a measurable function satisfying \eqref{eq:K_sym} and \eqref{eq:K_growth2} with  $\lambda = 0$ and $\gamma < 1$. 
If $\langle x^2,f_0 \rangle < \infty$, then  there exists a unique time-dependent solution $(f_t)_{t>0}$ to \eqref{eq:smol} in the sense of definition \ref{def:time-dep}. 
\end{theorem}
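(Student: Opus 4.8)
\emph{Strategy.} I would follow the three-step scheme behind Norris's result \cite{Norris}: first establish \emph{a priori} moment bounds, then construct a solution by truncating the kernel (and the data), and finally prove uniqueness by a Gronwall estimate in a weighted norm. \emph{A priori estimates.} With $\lambda=0$, condition \eqref{eq:K_growth2} reads $K(x,y)\le c_2(x^\gamma+y^\gamma)$. Testing \eqref{eq:timeDep} against the truncated functions $\varphi_R(x)=x\wedge R$ and letting $R\to\infty$, one checks, using $\gamma<1$ and the finiteness of $\langle x,f_0\rangle$ (which follows from $\langle x^2,f_0\rangle<\infty$ together with condition 4), that no mass escapes to infinity, so mass is conserved: $\langle x,f_t\rangle=\langle x,f_0\rangle=:M_1$. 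Testing next against a truncation of $\varphi(x)=x^2$ and using $(x+y)^2-x^2-y^2=2xy$ gives, after removing the truncation,
\[
\frac{d}{dt}\langle x^2,f_t\rangle=\int_{\R_*}\!\int_{\R_*}K(x,y)\,xy\,f_t(dx)f_t(dy)\ \le\ 2c_2\,\langle x^{1+\gamma},f_t\rangle\,M_1 .
\]
Since $1+\gamma<2$ one has $x^{1+\gamma}\le x+x^2$ on $\R_*$ for $\gamma\ge0$ (the region $x\le 1$ being controlled by condition 4 when $\gamma<0$), so $\frac{d}{dt}\langle x^2,f_t\rangle\le C(M_1)(1+\langle x^2,f_t\rangle)$ and Gronwall yields $\sup_{t\le T}\langle x^2,f_t\rangle<\infty$ for every $T<\infty$, with a bound depending only on $\langle x^2,f_0\rangle$ and $T$. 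These uniform bounds on $\langle 1+x+x^2,\cdot\rangle$ drive everything else.

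\emph{Existence.} Replace $K$ by the bounded measurable symmetric kernel $K_n:=K\wedge n$ and $f_0$ by $f_0^n:=\mathbbm{1}_{x\ge 1/n}\,f_0$, which has finite total mass by condition 4. For such a bounded kernel and bounded data, $f\mapsto L(f)$ is locally Lipschitz on the Banach space $\mathcal{M}_{+,b}(\R_*)$, so by Picard iteration (or the Schauder argument used elsewhere in the paper) there is a unique solution $f^n$ with conserved mass $\langle x,f^n_t\rangle=\langle x,f_0^n\rangle\le M_1$. Since $K_n\le K$, the \emph{a priori} estimates above apply to $f^n$ uniformly in $n$, giving uniform bounds on $\langle 1+x+x^2,f^n_t\rangle$ on $[0,T]$; these yield tightness of $\{f^n_t\}_n$ on compact subsets of $\R_*$ for each $t$ (mass near $0$ and near $\infty$ is uniformly small), while the weak formulation gives equicontinuity in $t$. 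Extracting a subsequential limit $f$ by a diagonal/Arzel\`a--Ascoli argument, I would pass to the limit in \eqref{eq:timeDep}: the only delicate term is the quadratic one, handled via the boundedness and compact support of $\varphi$ (so $\varphi(x+y)-\varphi(x)-\varphi(y)$ is bounded and vanishes unless $x$ or $y$ lies in a fixed compact), the uniform moment control of the tails, and $K_n\uparrow K$. One then verifies conditions 1--4 for $f$; the no-influx-at-zero condition 4 follows from the uniform control of $\langle x\mathbbm{1}_{x\le1},f^n_t\rangle$.

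\emph{Uniqueness.} Given two solutions $f,g$ with $f_0=g_0$, both satisfying the \emph{a priori} bounds, introduce the weighted distance $d(f_t,g_t):=\sup\{|\langle\varphi,f_t-g_t\rangle|:|\varphi|\le\sigma\}$ for a weight $\sigma$ chosen so that $K(x,y)\,|\varphi(x+y)-\varphi(x)-\varphi(y)|\le C\,\sigma(x)\sigma(y)$ whenever $|\varphi|\le\sigma$ (here $\lambda=0$, $\gamma<1$ makes $\sigma(x)\asymp 1+x$ work, with a harmless modification near the origin when $\gamma<0$). Writing $f\otimes f-g\otimes g=(f-g)\otimes f+g\otimes(f-g)$ in the bilinear form defining $L$, the weak formulation gives $d(f_t,g_t)\le\int_0^t C(s)\,d(f_s,g_s)\,ds$ with $C(s)$ finite and locally bounded thanks to the moment bounds; since $d(f_0,g_0)=0$, Gronwall forces $f_t=g_t$.

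\emph{Main obstacle.} The crux is the existence step: passing to the limit in the quadratic term while ruling out loss of mass both at $x=0$ (tied to condition 4) and at $x=\infty$ (tied to the absence of gelation for $\gamma<1$); relatedly, in the uniqueness step one must pick a weight simultaneously dominating the kernel's sublinear growth at infinity and its possible singularity at the origin when $\gamma<0$.
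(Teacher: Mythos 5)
Your overall architecture (a priori moment bounds, truncation, compactness, separate Gronwall uniqueness) is a standard alternative to the paper's route, and the existence half is essentially sound; but note that it genuinely differs from what the paper does. Following \cite{Norris}, the paper truncates in the \emph{size} variable (measures supported on a compact $B\subset\R_*$) and carries along an auxiliary scalar $\lambda_t$ recording the $w$-weighted mass leaving $B$, with $w(x)=x^\gamma$; both existence and uniqueness are then extracted at once from the comparison inequalities $f_t\le g_t$ and $\langle w,f_t\rangle+\lambda_t\ge\langle w,g_t\rangle$, valid for every local solution $g$, together with the fact that $\langle w^2,f_0\rangle<\infty$ forces $\lambda_t=0$ in the limit $B\to\R_*$. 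You truncate the kernel instead, so you must prove uniqueness by hand, and that is where the gap lies.

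Your Gronwall step needs $K(x,y)\,|\varphi(x+y)-\varphi(x)-\varphi(y)|\le C\,\sigma(x)\sigma(y)$ (or at least a bound of the form $C(\sigma(x)h(y)+\sigma(y)h(x))$ with $h$ integrable against $f_t+g_t$) for all $|\varphi|\le\sigma$, and you assert that $\sigma(x)\asymp 1+x$ works because $\lambda=0$, $\gamma<1$. It does not when $0<\gamma<1$: from $|\varphi|\le\sigma$ one only gets $|\varphi(x+y)-\varphi(x)-\varphi(y)|\le C(1+x+y)$, and multiplying by $K(x,y)\le c_2(x^\gamma+y^\gamma)$ produces the diagonal term $x^\gamma\cdot x=x^{1+\gamma}$, which is not $O(\sigma(x))$ uniformly in $x$ since $1+\gamma>1$; this term is unavoidable because it already appears in the loss term $\int K(x,y)\varphi(x)(f-g)(dx)(f+g)(dy)$. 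Enlarging the norm to $\langle x^{1+\gamma},|f-g|\rangle$ merely reproduces the problem with a higher moment, so the iteration never closes. This loss of a factor $x^\gamma$ at each step is exactly the obstruction that Norris's minimal-solution comparison (and, differently, the primitive-based distance of \cite{FL06}) is designed to bypass; a direct Gronwall bound in a weighted total-variation norm is not available in this generality. Two smaller points: for $\gamma<0$ the kernel may blow up at the origin, and your ``harmless modification near the origin'' would require a moment $\langle x^{\gamma}\mathbbm{1}_{x\le1},f_t\rangle$ that is not supplied by condition 4 or by $\langle x^2,f_0\rangle<\infty$; and ``mass is conserved'' should be weakened to the inequality $\langle x,f_t\rangle\le\langle x,f_0\rangle$ (obtained from $\varphi_R=x\wedge R$), which is all that is immediate and all your estimates actually need.
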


We consider now a source $\eta \neq 0$ of small particles entering into the system at a constant rate as described by equation \eqref{eq:smol_source}. We study the existence of stationary injection solutions, i.e., solutions that satisfy  $f(t,x)=f(0,x)$ for all $t>0$, as defined next \cite{FLNV19}.

\begin{definition}
\label{DefFluxSol} Assume that $K:{\R}_*^{2}\rightarrow {\mathbb{R}%
}_{+}$ is a continuous function satisfying \eqref{eq:K_sym} and the upper bound %
\eqref{eq:K_growth2}. Assume further that $\eta \in \mathcal{M}_{+}\left( \mathbb{%
R}_*\right) $ satisfies \eqref{eq:cond_eta}. We will say that $f\in 
\mathcal{M}_{+}\left( \mathbb{R}_*\right) ,$ satisfying $f\left( \left(
0,1\right) \right) =0$ and
\begin{equation}
\int_{\R_* }x^{\gamma +\lambda }f\left( dx\right) + \int_{\R_* }x^{-\lambda }f\left( dx\right) <\infty\,,
\label{eq:moment_cond}
\end{equation}
is a stationary injection solution of \eqref{eq:smol_source} if the following
identity holds for any test function 
$\varphi \in C_{c}({\R}_*)$: 
\begin{equation}
\frac{1}{2}\int_{\R_*}\int_{\R_*} K\left( x,y\right) \left[
\varphi \left( x+y\right) -\varphi \left( x\right) -\varphi \left( y\right) %
\right] f\left( dx\right) f\left( dy\right) +\int_{\R_*}\varphi
\left( x\right) \eta \left( dx\right) =0\,.  \label{eq:stationary_eq}
\end{equation}
\end{definition}

Condition \eqref{eq:moment_cond} is the minimal one for the integrals in \eqref{eq:stationary_eq} to be well-defined. 
Stationary injection solutions have a constant in time flux of mass  from small to large sizes, due to the source, therefore they are non-equilibrium solutions. Note that to be able to be stationary, the volume of particles entering the system has to balance the volume of particles leaving the system. Interestingly, there is an implicit removal of particles from the system at infinite sizes that allows the existence of these solutions. 
As we will see in the next two Theorems, for some class of coagulation rates, including the diffusive kernel \eqref{eq:diff_kernel}, such balance exists, while for other class of kernels, including the free molecular kernel \eqref{eq:free_mol_kernel}, such balance does not exist.

\begin{theorem}
\label{thm:existence} Assume that $K$ satisfies~\eqref{eq:K_sym}--\eqref{eq:K_lam_gam} and $| \gamma +2\lambda | <1.$ Let $\eta \neq 0  $ satisfy \eqref{eq:cond_eta}. Then, there exists a stationary injection solution $f\in 
\mathcal{M}_{+}\left( \mathbb{R}_*\right) $, $f\neq 0$, to~%
\eqref{eq:smol_source} in the sense of Definition~\ref{DefFluxSol}. 
\end{theorem}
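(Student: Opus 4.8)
\begin{proofof}[Proof strategy]
The plan is to realize a stationary injection solution as a fixed point of the nonlinear map suggested by the strong form of \eqref{eq:smol}. Writing the stationary equation schematically as $f(x)\,\mathcal L[f](x)=Q^{+}[f](x)+\eta$, where $\mathcal L[f](x):=\int_{\R_*}K(x,y)\,f(dy)$ is the loss coefficient and $Q^{+}[f]$ is the gain measure, $\langle\varphi,Q^{+}[f]\rangle=\tfrac12\iint_{\R_*^{2}}K(x,y)\,\varphi(x+y)\,f(dx)\,f(dy)$, a solution is a fixed point of $T[f]:=\big(Q^{+}[f]+\eta\big)/\mathcal L[f]$, the quotient of a nonnegative measure by a positive continuous function. (Equivalently one may freeze one factor in the quadratic terms, turning the equation for $f$ with frozen data into a linear Volterra-type problem solvable by iteration; both routes lead to the same fixed-point equation.) Schauder's theorem cannot be applied to $T$ directly on $\mathcal M_{+}(\R_*)$, because $\mathcal L[f]$ may degenerate or grow and mass may escape to $x=0$ or to $x=\infty$. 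I would therefore first \emph{regularize}: since $\supp\eta\subset[1,L]$ by \eqref{eq:cond_eta} and particles only grow, any solution is supported in $[1,\infty)$, so one may restrict to the interval $[1,n]$, cut the kernel off so that it becomes bounded and continuous there, and discard coagulation products of size $>n$ --- this endows the truncated problem with a sink at large sizes, which is exactly the ``implicit removal at infinite sizes'' that makes a bounded stationary solution possible. For the resulting map $T_n$ on $\mathcal M_{+,b}([1,n])$ one then checks: (i) $T_n$ maps some weak-$*$ compact convex set $\mathcal C_n$ (a closed ball, intersected if needed with a mass constraint forced by $\eta\neq0$) into itself, using the two-sided homogeneous bounds \eqref{eq:K_growth1}--\eqref{eq:K_growth2} and the mass identity obtained by testing with $\varphi\equiv1$; (ii) $T_n$ is weak-$*$ continuous, which follows from boundedness and continuity of the truncated kernel and a uniform lower bound for $\mathcal L_n$ on $\mathcal C_n$. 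Schauder's fixed point theorem then produces $f_n\in\mathcal C_n$ solving the truncated stationary equation, with $f_n\neq0$ because $\eta\neq0$.

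The core of the argument is to control $f_n$ \emph{uniformly in $n$}. Here the continuity equation \eqref{eq:continuity_eq} enters: testing the truncated equation with $\varphi(x)=x\,\mathbbm 1_{x\le R}$ (and mollifications thereof) shows that for every $R\in(L,n)$ the mass flux of $f_n$ towards large sizes equals the injected mass $\int_{\R_*}x\,\eta(dx)$, independently of $R$. A flux that is constant in $R$ forces, through the homogeneity of the bounds on $K$, decay of $f_n$ at large sizes of the strength of a constant-flux profile $\sim x^{-(\gamma+3)/2}$. I would turn this into the moment bounds
\begin{equation*}
\int_{[1,R]}x^{\gamma+\lambda}f_n(dx)+\int_{[1,R]}x^{-\lambda}f_n(dx)\le C ,
\end{equation*}
with $C$ independent of $n$ and $R$, by testing \eqref{eq:stationary_eq} (truncated) with suitably cut-off powers $x^{\gamma+\lambda}$ and $x^{-\lambda}$, using the lower bound \eqref{eq:K_growth1} to absorb the loss term and the upper bound \eqref{eq:K_growth2} together with the already-established mass control to dominate the gain term and the commutator $\varphi(x+y)-\varphi(x)-\varphi(y)$. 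This is precisely where $|\gamma+2\lambda|<1$ is used: the two integrals above are exactly the moments in \eqref{eq:moment_cond}, and for a profile decaying like $x^{-(\gamma+3)/2}$ they are finite if and only if $\gamma+2\lambda<1$ (for the $x^{\gamma+\lambda}$-moment) and $\gamma+2\lambda>-1$ (for the $x^{-\lambda}$-moment).

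With these uniform bounds I would extract a subsequence $f_n\rightharpoonup f$ weak-$*$ in $\mathcal M_{+}(\R_*)$. By lower semicontinuity $f$ satisfies \eqref{eq:moment_cond}, and $f((0,1))=0$. To see $f\neq0$ one needs a lower bound on $f_n$ on a fixed compact neighbourhood of $\supp\eta$, uniform in $n$; this comes from testing with a nonnegative $\varphi$ supported near $[1,L]$ and using $\eta\neq0$ together with the a priori upper bound on $\mathcal L_n[f_n]$ there. Finally I would pass to the limit in each term of \eqref{eq:stationary_eq} for fixed $\varphi\in C_c(\R_*)$: the truncation discarding products of size $>n$ is eventually inactive on $\supp\varphi$; the bilinear term converges because $K(x,y)\big[\varphi(x+y)-\varphi(x)-\varphi(y)\big]$ is bounded, continuous and compactly supported in one variable while the moment bounds control the tail in the other; and the source term is unchanged. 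This yields a nonzero $f\in\mathcal M_{+}(\R_*)$ satisfying \eqref{eq:stationary_eq}, i.e. a stationary injection solution in the sense of Definition~\ref{DefFluxSol}.

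The main obstacle is the uniform a priori estimate of the second paragraph. The Schauder step on the truncated problem is essentially soft; the real work is to convert ``the mass flux equals $\int x\,\eta$ for all $R$'' into genuine, truncation-independent control of $\int x^{\gamma+\lambda}f_n$ and $\int x^{-\lambda}f_n$, which requires quantitative power-law estimates for approximate constant-flux solutions. It is exactly here that the hypothesis $|\gamma+2\lambda|<1$ is used and is sharp: when it fails, one of the two moments in \eqref{eq:moment_cond} diverges, and, as shown in Section~\ref{sec:non-exist}, no stationary injection solution exists.
\end{proofof}
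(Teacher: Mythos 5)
Your overall architecture --- truncate, apply Schauder, derive truncation-independent power-law and moment estimates from the constancy of the mass flux, then pass to the limit --- is the same as the paper's, and your identification of where $|\gamma+2\lambda|<1$ enters (finiteness of the two moments in \eqref{eq:moment_cond} for a constant-flux profile $\sim x^{-(\gamma+3)/2}$) matches the paper exactly, including the need for a lemma in the spirit of Lemma~\ref{lem:bound} to upgrade locally averaged bounds to moment bounds. The one genuinely different choice is the object to which Schauder is applied. The paper does \emph{not} use the stationary map $T[f]=(Q^{+}[f]+\eta)/\mathcal L[f]$; it first solves the truncated \emph{evolution} problem \eqref{evolEqTrunc} by Banach fixed point (Proposition~\ref{thm:existence_evolution}), shows that the semigroup $S(t)$ leaves a weak-$*$ compact convex set $\mathcal U_M$ invariant via the Riccati-type inequality $\frac{d}{dt}\int f \le -\frac{a_1}{2}\left(\int f\right)^2+c_0$, applies Schauder to $S(\delta)$ for each $\delta>0$, and extracts a common fixed point of all $S(t)$ as $\delta\to 0$ using the semigroup property and continuity in $t$.

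This difference is not cosmetic: your invariance step (i) has a genuine gap. For the map $T_n$ on a set $\{m\le \|f\|\le M\}$ the natural estimate is $\|T_n[f]\|\le \frac{a_2}{2a_1}\|f\|+\frac{\|\eta\|}{a_1 m}$, using the upper bound $a_2$ on the truncated kernel in the gain term and the lower bound $\mathcal L_n[f]\ge a_1\|f\|$ in the loss coefficient; this maps the set into itself only when $a_2$ is at most about $2a_1$, i.e.\ for nearly constant kernels, whereas for truncations of a kernel satisfying \eqref{eq:K_growth1}--\eqref{eq:K_growth2} the ratio of the upper to the lower bound on the relevant range is unbounded as the regularization is removed. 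Moreover the mass floor $m$ (needed so that the quotient is well defined and weak-$*$ continuous) and the ceiling $M$ then fight each other in this estimate. The evolution-semigroup route avoids the issue entirely because the quadratic loss term enters the a priori bound with the favourable sign regardless of $a_2/a_1$. So either switch to the paper's formulation for the Schauder step, or you must construct a genuinely different invariant set for $T_n$ --- as written, the fixed-point step is the weak link, not the ``essentially soft'' part you describe. A smaller structural omission: the paper regularizes with two parameters (an $\varepsilon$-modification making the kernel bounded above and below, and a support cutoff $R_*$) and removes them in two separate limits with different uniform estimates for each; collapsing them into a single parameter $n$ obscures which estimate must be uniform in what.
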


\begin{theorem}
\label{thm:NonExistence} Suppose that $K $ satisfies~\eqref{eq:K_sym}--\eqref{eq:K_lam_gam} as well as $| \gamma +2\lambda | \geq 1.$ Let
us assume also that $\eta \neq 0$  satisfies \eqref{eq:cond_eta}. Then, there is not any solution of~\eqref{eq:smol_source} in the sense of  Definition \ref{DefFluxSol}.
\end{theorem}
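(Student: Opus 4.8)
The plan is to argue by contradiction: suppose a stationary injection solution $f$ exists in the sense of Definition \ref{DefFluxSol} when $|\gamma+2\lambda|\ge 1$. The starting point is to rewrite \eqref{eq:stationary_eq} in flux form. Choosing test functions that approximate $\varphi(x)=x\,\mathbbm{1}_{x\le R}$ and passing to the limit, one obtains that the mass flux $J(R)=\int_0^R\int_{R-y}^\infty y\,K(y,z)f(dy)f(dz)$ is constant in $R$ for all $R$ beyond $\supp\eta$, say $R>L$, and equals the total injected mass $\int_{\R_*}x\,\eta(dx)=:N_\eta>0$. In particular $J(R)\equiv N_\eta>0$ for large $R$. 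This step uses only the upper bound \eqref{eq:K_growth2} together with the moment condition \eqref{eq:moment_cond}, which guarantees the manipulations are legitimate; the continuity equation recalled in Section \ref{sec:prelimi} is the formal version of this identity.

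Next I would derive power-law decay estimates for $f$ at large sizes from the constant-flux condition. Heuristically, if $f(dx)\sim x^{-\tau}dx$, then inserting this ansatz into $J(R)=N_\eta$ forces $\tau = (3+\gamma)/2$ when the kernel behaves like the homogeneous bounds in \eqref{eq:K_growth1}--\eqref{eq:K_growth2}; more precisely, using the lower bound \eqref{eq:K_growth1} one shows an upper bound of the form $\int_{R}^{2R}x\,f(dx)\le C R^{1-(1+\gamma)/2}\cdots$, and using the upper bound \eqref{eq:K_growth2} together with $J\equiv N_\eta>0$ one shows a matching lower bound $\int_R^{2R}x f(dx)\ge c R^{(1-\gamma)/2}$ for $R$ large. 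These two-sided dyadic estimates pin down the decay rate of $f$ up to constants, and they are where the exponent combination $\gamma+2\lambda$ enters, because the off-diagonal parameter $\lambda$ controls how much of the flux integral is concentrated near the diagonal $y\approx z$ versus the region $y\ll z$.

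The contradiction then comes from testing \eqref{eq:stationary_eq} against a well-chosen $\varphi$ — for instance $\varphi$ supported near a large scale $R$, or $\varphi(x)=x\,\mathbbm{1}_{x\le R}$ combined with a second moment — and using the decay estimates to evaluate both the gain/loss coagulation integral and the source term. When $|\gamma+2\lambda|\ge 1$, the coagulation integral either diverges (so that \eqref{eq:moment_cond} cannot hold, or the flux integral defining $J$ is infinite) or tends to zero too fast to balance the strictly positive constant $N_\eta$; either way the stationary identity fails. Concretely, one splits the double integral in \eqref{eq:stationary_eq} into the near-diagonal region and the region of widely separated sizes, estimates each using \eqref{eq:K_growth1}--\eqref{eq:K_growth2} and the power-law bounds just established, and checks that the borderline exponents $\gamma+2\lambda=\pm 1$ are exactly the threshold at which a self-consistent constant flux becomes impossible.

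I expect the main obstacle to be establishing the rigorous two-sided power-law bounds on $f$ from the constant-flux condition alone: the heuristic scaling argument is immediate, but turning it into genuine dyadic upper and lower bounds valid for a general measure solution (not a smooth self-similar profile) requires careful bootstrapping — one typically proves a crude a priori bound from \eqref{eq:moment_cond}, feeds it into the flux identity to improve it, and iterates until the sharp exponent is reached, all while controlling the contribution of the region near $\supp\eta$ and the possible atoms of $f$. Handling the two cases $\gamma+2\lambda\ge 1$ and $\gamma+2\lambda\le -1$ separately, and the precise role of $\lambda$ (as opposed to $\gamma$ alone) in the diagonal concentration of the flux, is the other delicate point.
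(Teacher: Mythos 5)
Your first step (the flux formulation $J(R)=\int_{(0,R]}x\,\eta(dx)$, hence $J\equiv J_\eta>0$ for $R\ge L_\eta$) matches the paper exactly, and your remark that one must split the flux integral into a near-diagonal region and a region of widely separated sizes is also the right instinct. However, the central quantitative step of your plan — the \emph{two-sided} dyadic power law $\int_{[R,2R]}x\,f(dx)\asymp R^{(1-\gamma)/2}$ — contains a genuine gap. The upper bound $\int_{[2R/3,R]}f(dx)\le C R^{-(1+\gamma)/2}$ does follow from the kernel lower bound \eqref{eq:K_growth1} restricted to the diagonal part of $\Sigma_R$ (this is exactly estimate (A1) used in the \emph{existence} proof). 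But the matching lower bound cannot be obtained the way you describe: applying the kernel upper bound \eqref{eq:K_growth2} to $J(R)=J_\eta$ gives a lower bound on the \emph{whole} integral over $\Sigma_R$, not on its diagonal strip, so you cannot conclude $\int_{[R,2R]}xf(dx)\ge cR^{(1-\gamma)/2}$. Indeed, the crux of the paper's proof is that in the regime $|\gamma+2\lambda|\ge 1$ the diagonal contribution $J_2(R)$ (over $\{y>\delta x\}$) \emph{vanishes} as $R\to\infty$ — this is proved using \eqref{eq:K_growth2} and the moment condition \eqref{eq:moment_cond} — so the flux is asymptotically carried entirely by collisions between particles of size $\approx R$ and particles of size $O(1)$. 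The constant-flux scaling exponent $(3+\gamma)/2$ is therefore not the relevant one here.

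What the paper does instead, after showing $J_1(R)\to J_\eta$, is extract from the off-diagonal flux the inequality
\begin{equation*}
\int_{[1,\delta R]}y^{b}f(dy)\int_{(R-y,R]}f(dx)\;\ge\;\frac{c}{R^{a+1}},\qquad a=\gamma+\lambda,\ b=-\lambda,
\end{equation*}
and then converts this, via a comparison/iteration lemma over scales (Lemmas 4.1 and 4.2 of \cite{FLNV19}), into the pointwise tail bound $F(R)=\int_{(R,\infty)}f(dx)\ge B R^{-a}$ when $a\ge0$ (with a logarithm when $a=0$), or the analogous bound for the cumulative distribution when $a<0$. This directly contradicts the finiteness of the moment $\int x^{a}f(dx)$ imposed by Definition \ref{DefFluxSol}, since $R^{a}F(R)\le\int_{(R,\infty)}x^{a}f(dx)\to0$. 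Your proposal never identifies this mechanism: the contradiction you sketch (``the coagulation integral diverges or tends to zero too fast'') is left at the heuristic level, and the iteration needed to pass from the integrated flux inequality to a genuine tail lower bound — which you correctly flag as the delicate point — is precisely the missing ingredient rather than a technicality that can be bootstrapped from the two-sided power law you assume.
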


Note that the diffusive kernel \eqref{eq:diff_kernel} satisfies the growth conditions \eqref{eq:K_growth1}-\eqref{eq:K_growth2} with $\gamma = 0$ and $\lambda = 1/3 $, while the free molecular kernel \eqref{eq:free_mol_kernel} satisfies the growth conditions with $\gamma = 1/6 $ and $\lambda = 1/2$. Therefore there exists a stationary solution for the diffusive but not for the free molecular kernel.

The mass flux from small to large sizes associated to a stationary injection solutions is given in the next Lemma. 

\begin{lemma}\label{lem:flux}
Suppose that the assumptions of Theorem~\ref{thm:existence} hold. Let $f$ be a stationary injection solution in the sense of Definition~\ref{DefFluxSol}. Then $f$ satisfies for any $R>0$ 
\begin{equation}\label{eq:flux_lem}
J(R) = \int_{(0,R]} x \eta(dx) \, 
\end{equation}
where $J(R)$ is the mass flux at size $R$ and it is defined by
\begin{equation*}
J(R) := \int_{ (0,R]}\int_{(R-x,\infty)} K(x,y)x f(dx) f(dy)
\end{equation*}
\end{lemma}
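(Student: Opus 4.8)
The plan is to apply the weak formulation \eqref{eq:stationary_eq} with a test function that approximates the indicator $\mathbbm{1}_{(0,R]}(x)\cdot x$, i.e. the observable whose rate of change measures the mass contained in clusters of size at most $R$. For a smooth cutoff $\psi_R$ with $\psi_R \equiv 1$ on $(0,R]$ and $\psi_R$ supported in $(0,R+\delta]$, set $\varphi(x) = x\psi_R(x)$. This is not compactly supported away from $0$ unless we also truncate near $0$, but since $f((0,1))=0$ and $\eta$ is supported in $[1,L]$, the behaviour near $0$ is irrelevant and one can legitimately take $\varphi \in C_c(\R_*)$ that equals $x$ on $[1,R]$ and vanishes past $R+\delta$; alternatively, argue by a density/monotone-convergence step from genuinely compactly supported test functions. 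Plugging $\varphi(x)=x$ (on the relevant range) into the symmetric coagulation term, the key algebraic identity is that $\varphi(x+y)-\varphi(x)-\varphi(y) = 0$ whenever $\varphi$ is linear, so the "bulk" contribution cancels and only the boundary layer where $x+y$ exceeds $R$ survives. Concretely, with $\varphi(x)=x$ for $x\le R$ and $\varphi(x)=0$ for $x\ge R+\delta$, the bracket $\varphi(x+y)-\varphi(x)-\varphi(y)$ is $-(x+y)$ (up to the smoothing error) precisely when $x\le R$, $y\le R$, $x+y>R$, and vanishes otherwise.

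Carrying this out: after letting $\delta \to 0$, the coagulation term converges to
$$-\tfrac12 \int\!\!\int_{\{x\le R,\ y\le R,\ x+y>R\}} K(x,y)(x+y)\, f(dx)f(dy),$$
and by symmetry of $K$ and of the domain this equals
$$-\int_{(0,R]}\int_{(R-x,R]} K(x,y)\, x\, f(dx)f(dy).$$
On the other hand the loss-type part of the domain (where $x\le R$ but $y>R$, so $\varphi(x+y)=0=\varphi(y)$ and the bracket is $-x$) contributes
$$-\int_{(0,R]}\int_{(R,\infty)} K(x,y)\, x\, f(dx)f(dy).$$
Wait — more carefully, the full integral of $K(x,y)[\varphi(x+y)-\varphi(x)-\varphi(y)]$ over all of $\R_*^2$ decomposes according to whether each of $x,y$ is $\le R$ or $>R$; the terms with $x>R$ and $y>R$ give a bracket that is either $0$ or, when both exceed $R$, still $0$ since all three evaluations vanish; the cross terms and the $\{x\le R, y\le R, x+y>R\}$ term assemble, after using symmetry to fold the two cross terms together, into exactly $-\tfrac12\cdot 2\int_{(0,R]}\int_{(R-x,\infty)} K(x,y)\,x\, f(dx)f(dy) = -J(R)$. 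The source term contributes $\int \varphi\,\eta(dx) = \int_{(0,R]} x\,\eta(dx)$ once $R\ge$ the left endpoint of $\supp\eta$ (and for smaller $R$ both sides vanish). Equation \eqref{eq:stationary_eq} then reads $-J(R) + \int_{(0,R]} x\,\eta(dx) = 0$, which is \eqref{eq:flux_lem}.

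The main obstacle is the approximation argument: $\varphi(x) = x\,\mathbbm{1}_{(0,R]}(x)$ is neither continuous nor compactly supported in $\R_*$, so one must justify passing from admissible test functions to this limiting observable. This requires (i) controlling the error from the smooth corner at $x=R$ — here one uses that the integrand $K(x,y)(x+y)$ is $f\otimes f$-integrable on the relevant bounded region thanks to the moment bound \eqref{eq:moment_cond} together with the growth bound \eqref{eq:K_growth2}, so dominated convergence applies as $\delta\to 0$ — and (ii) handling the unbounded tail $y\to\infty$ in the cross term $\int_{(0,R]}\int_{(R,\infty)} K(x,y)x f(dx)f(dy)$, which is finite again by \eqref{eq:K_growth2} and \eqref{eq:moment_cond}: on this region $K(x,y)\le c_2(x^{\gamma+\lambda}y^{-\lambda} + y^{\gamma+\lambda}x^{-\lambda})$ and $x$ ranges over the bounded set $(0,R]$, so the $y$-integral is dominated by $\int x^{-\lambda}(\cdot) + \int y^{\gamma+\lambda}(\cdot)$ against $f$, both finite by \eqref{eq:moment_cond}. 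Once integrability is secured, the cancellation of the linear part of $\varphi$ is purely algebraic and the identity drops out.
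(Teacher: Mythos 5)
Your proposal is correct and follows essentially the same route as the paper: the paper's (sketched) proof also uses the test function $\varphi(x)=x\chi_\varepsilon(x)$ with a smooth cutoff equal to $1$ on $[1,R]$ and vanishing beyond $R+\varepsilon$, substitutes it into \eqref{eq:stationary_eq}, and lets $\varepsilon\to 0$; your region-by-region evaluation of the bracket and the integrability justification via \eqref{eq:K_growth2} and \eqref{eq:moment_cond} supply exactly the "some computations" the paper leaves implicit.
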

\begin{remark}
If $R\geq L_\eta$, the right-hand side of \eqref{eq:flux_lem} is constant equal to $J_\eta=\int_{[1,L_\eta]} x \eta(dx)>0$.  Therefore,  $J(R)=J_\eta$ for $R>L_\eta$, i.e., the mass flux is constant in the regions that include large sizes.
\end{remark}

The main ideas to prove Lemma \ref{lem:flux} are the following. For  
  each $\varepsilon>0$, we define the test function 
$\varphi(x) = x \chi_\varepsilon(x) \in C_{c}({\R}_*)$ where
 $\chi_\varepsilon \in C_c^\infty(\R_*)$ is such that $0\le \chi_\varepsilon\le 1$, 
$\chi_\varepsilon (x) = 1$, for $1\le x \leq R$, and $\chi_\varepsilon(x) = 0$, for $x \geq R+\varepsilon $. 
Using this test function in \eqref{eq:stationary_eq}, the result can be obtained after letting $\varepsilon \to 0$.

\subsection{Well-posedness for the time-dependent problem}
\label{sec:well-posedness}

We describe the main ideas of the proof of Theorem \ref{thm:well-posed} obtained in \cite{Norris} (Section 2).

The first step is to prove well-posedness for a truncated problem. The second step is to  obtain estimates that allow us to remove the truncation and to obtain well-posedness for the original problem. 

Let $B \subset \R_*$ be a compact set. 
Note that all measures in $\mathcal{M}(B)$ are bounded. 
Note that from the hypotheses of Theorem \ref{thm:well-posed} on the kernel we have that  
\begin{equation}\label{eq:K_cond_w}
0\leq K(x,y) \leq w(x) + w(y), \quad \text{ with } w(x) := x^\gamma \text{ and } \gamma < 1.
\end{equation} 
The truncated operator  $L^B: \mathcal{M}(B) \times \R \to \mathcal{M}(B) \times \R $ is defined by
\begin{eqnarray}
\langle (\varphi, a), L^B(f,\xi) \rangle := \hspace{11cm} \nonumber \\
 \frac{1}{2} \int_{\R_*\times \R_*} \{ \varphi(x+y) \mathbbm{1}_{ \{ x+y \in B\}} + a w(x+y) \mathbb{1}_{ \{ x+y \notin  B \}} - \varphi(x) - \varphi(y) \} \times K(x,y)f(dx)f(dy) \nonumber \\
+ \xi \int_{\R_*} \{ a w(x) - \varphi(x) \} w(x) f(dx)\nonumber
\end{eqnarray}
for all bounded measurable functions $\varphi$ on $\R_*$ and all $a \in \R$,  where  $\langle (\varphi, a), (f, \xi) \rangle $ denotes $\langle \varphi, f \rangle + a \xi $. The truncated equation reads
\begin{equation}\label{eq:trunc}
\langle (\varphi, a),(f_t,\xi_t)\rangle = \langle (\varphi, a),(f_0,\xi_0)\rangle + \int_0^t \langle (\varphi, a), L^B(f,\xi) \rangle ds.
\end{equation}

 An interpretation of the dynamics associated with operator $L^B$ is the following (see \cite{Norris} for more details). 
Particles of size $x$ and $y$ merge at a rate $K(x,y)$ and they produce a new particle of size $x+y$. If the merging particle has size outside $B$, we add $w(x+y)$.

A solution to \eqref{eq:trunc} is defined next.
\begin{definition}
Let $T \in (0,\infty )$. We will say that ${(f_t, \xi_t)}_{t \in [0, T]}$ is a local solution to \eqref{eq:trunc} if $t \mapsto (f_t,\xi_t); [0,T] \to  \mathcal{M}(B) \times \R$ is a continuous map satisfying \eqref{eq:trunc} for all $t \in [0,T]$. Additionally, $(f_t)_{t \in [0, T]}$ is called a solution  to \eqref{eq:trunc} when $[0,T]$ is replaced by $[0,\infty)$.
\end{definition}

\begin{proposition}\label{prop:Norris}
Suppose that $f_0 \in \mathcal{M}(B)$ with $f_0 \geq 0$ and $\xi_0 \in [0,\infty)$. The equation \eqref{eq:trunc} has a unique solution ${(f_t, \xi_t)}_{t \geq 0}$ starting from $(f_0, \xi_0)$. Moreover $f_t \geq 0$ and $ \xi_t \geq 0 $ for all $t\geq 0$.
\end{proposition}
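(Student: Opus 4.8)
The plan is to prove Proposition~\ref{prop:Norris} by recasting the truncated equation \eqref{eq:trunc} as an ordinary differential equation on the Banach space $\mathcal{M}(B)\times\R$ and applying the Picard--Lindel\"of theorem together with a positivity-preservation argument. First I would observe that since $B$ is compact, $w$ is bounded on $B$ and $K$ is bounded on $B\times B$ (by \eqref{eq:K_cond_w} and continuity), so the quantities appearing in $L^B$ involve only integration of bounded functions against finite measures. This lets me show that $L^B:\mathcal{M}(B)\times\R\to\mathcal{M}(B)\times\R$ is well-defined and, being quadratic in $(f,\lambda)$ with bounded coefficients, is locally Lipschitz: for $(f,\lambda),(g,\mu)$ in a ball of radius $\rho$ one gets $\|L^B(f,\lambda)-L^B(g,\mu)\|\le C(\rho)\,\|(f,\lambda)-(g,\mu)\|$ by writing the difference as a telescoping sum $f\otimes f-g\otimes g = (f-g)\otimes f + g\otimes(f-g)$ and estimating the total variation. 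Standard ODE theory then yields a unique maximal solution $(f_t,\lambda_t)$ starting from $(f_0,\lambda_0)$, continuous (indeed $C^1$) in $t$.

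Next I would establish the two a priori bounds that upgrade the local solution to a global one. Taking $\varphi\equiv 0$, $a=1$ in \eqref{eq:trunc} gives a closed identity for the total mass-like quantity; more usefully, testing with $\varphi = w\mathbbm{1}_B$ and $a$ chosen appropriately, or simply testing with $\varphi \equiv 1$ on $B$ and tracking the escape into $\lambda$, shows that $\langle w, f_t\rangle + \lambda_t$ (or an equivalent monotone functional) is controlled: the gain terms are dominated by the loss terms because $\varphi(x+y)-\varphi(x)-\varphi(y)\le 0$ for the relevant choices, so no blow-up in finite time occurs and the solution extends to all $t\ge 0$.

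The positivity statement $f_t\ge 0$, $\lambda_t\ge 0$ is the part requiring the most care, so I would treat it as the main obstacle. The clean way is to split $L^B(f,\lambda) = G(f,\lambda) - D(f,\lambda)$ into a "gain" part (the terms with $\varphi(x+y)$ and $a w(x+y)$, together with the $a w(x)w(x)$ term in the last line) which maps nonnegative $(f,\lambda)$ to nonnegative measures, and a "loss" part $D(f,\lambda)$ which has the form "multiplication of $f$ by a nonnegative density" — namely $\langle\varphi, D(f,\lambda)\rangle = \int \varphi(x)\,\theta_{f,\lambda}(x)\,f(dx)$ with $\theta_{f,\lambda}(x) = \int K(x,y)f(dy) + \lambda w(x)^2 \ge 0$ when $f\ge 0$, $\lambda\ge 0$. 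One then runs a Gr\"onwall/Duhamel argument: writing the solution via the integrating factor $\exp(-\int_0^t \theta_{f_s,\lambda_s}\,ds)$ applied pointwise in the size variable, one sees that as long as $f_s\ge 0$ the evolution keeps $f_t\ge 0$, and a continuity/bootstrap argument (the set of times where positivity holds is closed, open, and nonempty, using the local Lipschitz bound to control $\theta$) closes the loop; $\lambda_t\ge 0$ follows since $\dot\lambda_t$ equals a nonnegative gain term minus $0$ once one checks the loss contribution to $\lambda$ is absent or also of the multiplicative form. I would note that this is exactly the strategy of \cite{Norris} and refer there for the routine estimates, presenting here only the Lipschitz bound, the global-existence moment bound, and the sign-preservation scheme in the order above.
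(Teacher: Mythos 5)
Your proposal follows essentially the same route as the paper: a Picard-type construction (your Picard--Lindel\"of setup, the paper's explicit iteration scheme) for local well-posedness on $\mathcal{M}(B)\times\R$, sign preservation via a gain/loss splitting of $L^B$, and global extension using the monotone functional $\langle w,f_t\rangle+\lambda_t$, which is non-increasing because $w(x)=x^\gamma$ with $\gamma<1$ is subadditive. The one point that needs fixing is the ordering of your second and third steps: you derive the a priori bound and conclude global existence \emph{before} proving positivity, but that bound is only available once you know $f_t\ge 0$ and $\lambda_t\ge 0$ on the local interval. The inequality $\tfrac12\int\{w(x+y)-w(x)-w(y)\}K(x,y)\,f_t(dx)f_t(dy)\le 0$ uses that the product measure $f_t\otimes f_t$ is nonnegative (the integrand is only pointwise nonpositive), and the step from a bound on $\langle w,f_t\rangle+\lambda_t$ to a bound on $\|f_t\|+|\lambda_t|$ likewise requires nonnegativity (plus $w$ bounded below on the compact set $B\subset\R_*$). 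The paper therefore proves $f_t\ge 0$ on $[0,T]$ first -- by an iterative argument that is morally the same as your Duhamel/integrating-factor scheme with $\theta_{f,\lambda}(x)=\int K(x,y)f(dy)+\lambda w(x)^2$ -- and only then runs the extension; with that reordering your argument closes. A cosmetic difference: the paper restarts the local construction on $[T,2T],\dots$ using that $T$ depends only on $\gamma$ and $B$ together with a scaling normalization $\langle w,f_0\rangle+\lambda_0\le 1$, whereas you invoke the maximal-solution blow-up criterion; both are legitimate once the a priori bound is in place.
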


We will discuss the main ideas of the proof that is organized in three steps.
The first step is to show that there is a constant $T>0$ depending only on $\gamma$ and $B$ such that there exists a unique local solution ${(f_t, \xi_t)}_{t \in [0, T]}$ to \eqref{eq:trunc} starting from $(f_0, \xi_0)$.  This is obtained by using an iterative scheme of continuous maps $(f^n_t, \xi^n_t): [0,\infty) \mapsto \mathcal{M}(B) \times \R $ defined by 
\begin{eqnarray*}
(f^0_t, \xi^0_t)&=&(f_0, \xi_0)\\
(f^n_t, \xi^n_t)&=&(f_0, \xi_0) + \int_0^t L^B(f^{n-1}_t, \xi^{n-1}_t)
\end{eqnarray*}
and proving that there exists a $T>0$ such that $(f^n, \xi^n)$ converges in $\mathcal{M}(B) \times \R$ uniformly in $t\leq T$ to the desired local solution, which is also unique.

The second step is to prove that $f_t\geq 0,\ t \in [0, T]$, which is obtained using again an iterative argument similar to the one used in the first step. 

Finally, the third step is to show that the solution exists for all times $t \in [0,\infty)$. Choosing $\varphi=w $ and $a=1$ we obtain that
\begin{equation*}
\frac{d}{dt} (\langle w , f_t \rangle + \xi_t) = \frac{1}{2} \int_{\R_*\times \R_*} \{ w(x+y) - w(x) - w(y) \}K(x,y) f_t(dx) f_t(dy) \leq 0, 
\end{equation*}
which implies that
$$\| f_T\| + |\xi_T| \leq \langle w, f_T\rangle + \xi_T \leq   \langle w, f_0\rangle + \xi_0.$$
Using a scaling argument, we may assume without loss of generality that $\langle w, f_0\rangle + \xi_0  \leq 1$, consequently $\| f_T\| + |\xi_T| \leq 1$. We can start again from $(f_T,\xi_T)$ at time $T$ to extend the solution to $[0,2T]$ and so on.
Moreover, choosing $\varphi=0$ and $a=1$ in \eqref{eq:trunc}, we obtain 
$$ \frac{d }{dt} \xi_t = \frac{1}{2} \int_{\R_*\times \R_*} \{ w(x+y) \mathbb{1}_{ \bar B}(x+y)  \} K(x,y)f(dx)f(dy) + \xi_t \int_{\R_*}  w^2(x) f(dx),
$$
which implies that $\xi_t\geq 0$, for all $t \geq 0$, due to  $f_t\geq 0$, which ends the proof of the proposition.

\bigskip

\begin{proof}[Theorem \ref{thm:well-posed}]
Fix $f_0 \in \mathcal{M}_+$, such that $\langle w, f_0 \rangle < \infty$. For each compact set $B \subset {\R_*}$ define $f_0^B = \mathbbm{1}_B f_0 $ and $\xi_0^B = \int_{\bar B}w(x) f_0(dx)$. 
From Proposition \ref{prop:Norris} there is a unique solution $(f_0^B,\xi_t^B)_{t\geq 0 }$ to \eqref{eq:trunc} starting from $(f_0^B,\xi_0^B)$. We now set $f_t = \lim\limits_{B \to {\R_*}} f_t^B$ and $\xi_t= \lim\limits_{B \to {\R_*}} \xi_t^B$.  
 Using \eqref{eq:K_cond_w}, we obtain by dominated convergence, 
$$\frac{d}{dt} \langle \varphi,f_t \rangle = \frac{1}{2} \int_{{\R_*} \times {\R_*}} \{ \varphi(x+y) - \varphi(x) - \varphi(y)\} K(x,y) f_t(dx)f_t(dy) - \xi_t \langle \varphi w, f_t \rangle,
$$ 
for all bounded measurable functions $\varphi$.
One can prove that for all $t < T$ and for any local solution $(g_t)_{t<T}$, 
\begin{equation}\label{eq:nu_ineq}
f_t \leq g_t, \quad \langle w, f_t \rangle + \xi_t \geq \langle w, g_t \rangle.
\end{equation}
By hypothesis $\langle w^2, f_0 \rangle < \infty$. Since $K(x,y) \leq w(x) + w(y)$ it holds $\langle  w^2,f_t \rangle \leq \langle  w^2,f_0 \rangle \exp(2 \langle  w,f_0 \rangle t )$.
Therefore $\langle w^2 , f_t \rangle < \infty$ for all $t>0$, which allows to pass to the limit as $B \to {\R_*}$ in \eqref{eq:trunc} and to deduce that 
\begin{equation}\label{eq:lambda0}
\xi_t = 0,\ \ t>0.
\end{equation}
Then \eqref{eq:nu_ineq} and \eqref{eq:lambda0} imply that $(f_t)_{t\geq 0}$ is a time-dependent solution to \eqref{eq:smol} and moreover, it is the only solution.
\end{proof}

\subsection{Stationary solutions with injection}\label{sec:stationary}

We now consider  the Smoluchowski's coagulation equation with source  \eqref{eq:smol_source}.

\subsubsection{Existence of stationary solutions}
\label{sec:exist}

We present here the main ideas of the proof of the existence Theorem \ref{thm:existence} obtained in \cite{FLNV19}.
The general strategy  is similar to the strategy used in the proof of well-posedness presented in the previous Section. First, we prove existence of a stationary solution for a truncated problem and second, we obtain estimates that allow to remove the truncation and hence the existence result for the original problem. Unfortunately the method used to prove existence does not give uniqueness, that problem needs a separate treatment (see \cite{L15} for a simple explanation of the available techniques).

Let $\varepsilon>0$ and $R_* \geq L_\eta $, where $L_\eta$ is the upper bound of the support of the source $\eta$ defined in \eqref{eq:cond_eta}. We will eventually make $\varepsilon \to 0$ and $R_* \to \infty$. We consider kernels $K_{\varepsilon,R_*}$ that are continuous, bounded and have compact support, such that
\begin{eqnarray}
K_{\varepsilon,R_*}(x,y) \leq a_2(\varepsilon), &\quad (x,y) \in \R^2_+ \label{eq:K_eps_R1}\\
K_{\varepsilon,R_*}(x,y) \in [a_1(\varepsilon),a_2(\varepsilon)], &\quad (x,y) \in [1,2R_*] \label{eq:K_eps_R2} \\
K_{\varepsilon,R_*}(x,y) =0, &\quad x \geq 4R_* \text { or } y  \geq 4R_*, \label{eq:K_eps_R3}
\end{eqnarray} 
and
\begin{equation}
\lim\limits_{R_* \to \infty} K_{\varepsilon,R_*}(x,y) = K_{\varepsilon}(x,y) \label{eq:K_eps_R4}
\end{equation}
where $K_{\varepsilon} $ is continuous and satisfies $K_{\varepsilon}(x,y) \in [a_1(\varepsilon),a_2(\varepsilon)],$ for all $(x,y) \in \R_+^2$ and
\begin{equation}
\lim\limits_{\varepsilon \to 0} K_{\varepsilon}(x,y) = K(x,y).\label{eq:K_eps}
\end{equation}

Additionally, in the evolution equation, we consider a cut-off of the gain term due to the coagulation that ensures that the measure solutions are supported in $[1,2R_*]$ and bounded at all times. 
To this end, we choose $\zeta _{R_{\ast }}\in C\left(\mathbb{R}_*\right)$ such that $0\le \zeta _{R_{\ast }}\le 1$, $\zeta_{R_{\ast }}\left( x\right) =1$ for $0\leq x\leq R_{\ast }$, and $\zeta _{R_{\ast}}\left( x\right) =0$ for $x\geq 2R_{\ast }$.
The regularized time evolution equation  then reads as 
\begin{equation}
\partial _{t}f(x,t)=\frac{\zeta _{R_{\ast }\!}(x) }{2}\int_{\left(
0,x\right] }K_{\varepsilon,R_*}( x-y,y) f( x-y,t) f( y,t)
dy - \int_{{{\R}_*}}\! K_{\varepsilon,R_*}( x,y) f( x,t) f(
y,t) dy+\eta( x) \,. \label{evolEqTrunc}
\end{equation}

\begin{definition}\label{DefTimeSol}  Let $\varepsilon>0$ and $R_* \geq L_\eta $. Suppose that $K_{\varepsilon,R_*}$ satisfies \eqref{eq:K_eps_R1}-\eqref{eq:K_eps_R3} and $\eta \in \mathcal{M}_{+}({\mathbb{R}}_{+})$ satisfies~\eqref{eq:cond_eta}.
Consider some initial data $f_{0}\in \mathcal{M}_{+}({\mathbb{R}}_*)$ for which $f_{0}\left( \left( 0,1\right) \cup \left( 2R_{\ast
},\infty \right) \right) =0$. Then $f_0\in \mathcal{M}_{+,b}({\mathbb{R}}_*)$.
We will say that $f\in {C^{1}(\left[ 0,T%
\right] ,\mathcal{M}_{+,b}({\mathbb{R}}_*))}$ satisfying $f\left( \cdot
,0\right) =f_{0}\left( \cdot \right) $
is a time-dependent solution of \eqref{evolEqTrunc} if the following
identity holds for any test function $\varphi \in C^{1}(\left[ 0,T\right]
,C_{c}\left( {\mathbb{R}}_*\right) )$ and all $0<t<T$,
\begin{eqnarray}
&&\frac{d}{dt}\int_{{{\mathbb{R}}_*}}\varphi \left( x,t\right) f\left(
dx,t\right) -\int_{{{\mathbb{R}}_*}}\dot{\varphi}\left( x,t\right) f\left(
dx,t\right)  \notag \\
&=&\frac{1}{2}\int_{{{\mathbb{R}}_*}}\int_{{{\mathbb{R}}_*}}K_{\varepsilon,R_*}\left(
x,y\right) \left[ \varphi \left( x+y,t\right) \zeta _{R_{\ast }}\left(
x+y\right) -\varphi \left( x,t\right) -\varphi \left( y,t\right) \right]
f\left( dx,t\right) f\left( dy,t\right) \hspace{1cm}  \notag \\
&&+\int_{{{\mathbb{R}}_*}}\varphi \left( x,t\right) \eta \left( dx\right),
\label{eq:evol_eqWeak}
\end{eqnarray}
where $\dot \varphi$ denotes the Fréchet time-derivative of $\varphi$.
\end{definition}

\begin{proposition}
\label{thm:existence_evolution} 
Let $\varepsilon>0$ and $R_* \geq L_\eta $. Suppose that $K_{\varepsilon,R_*}$ satisfies \eqref{eq:K_eps_R1}-\eqref{eq:K_eps_R3} and $\eta \in \mathcal{M}_{+}({\mathbb{R}}_{+})$ satisfies~\eqref{eq:cond_eta}. Then, for any initial
condition $f_{0}$ satisfying $f_{0}\in \mathcal{M}_{+}({\mathbb{R}}_*)$, $%
f_{0}\left( \left( 0,1\right) \cup \left( 2R_{\ast },\infty \right) \right)
=0$ there exists a unique time-dependent solution 
$f\in {C^{1}(\left[ 0,T\right] ,\mathcal{M}_{+,b}({\mathbb{R}}_*))}$
to \eqref{evolEqTrunc} which solves it in the classical
sense. Moreover, $f$ is a Weak solution of \eqref{evolEqTrunc} in the sense of Definition  \ref{DefTimeSol}
such that
\begin{equation*}
f\left( \left( 0,1\right) \cup \left( 2R_{\ast },\infty \right) ,t\right)
=0\, ,\quad \text{for }0\leq t \le T\,,
\end{equation*}%
and the following estimate holds
\begin{equation*}
\int_{{\mathbb{R}}_*}f(dx,t)\leq  \int_{{\mathbb{R}}_*}f_0(dx)+ C t \, ,\quad t\geq 0 \,, \label{fEstim}
\end{equation*}%
for $C = \int_{{\mathbb{R}}_*}\eta(dx)\ge 0$ which is independent of $f_{0}$, $t$, and $T$.
\end{proposition}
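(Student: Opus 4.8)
## Proof Plan for Proposition~\ref{thm:existence_evolution}

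The plan is to treat the truncated equation~\eqref{evolEqTrunc} as an ordinary differential equation in the Banach space $\mathcal{M}_{+,b}(\mathbb{R}_*)$ (equipped with the total variation norm) and to invoke the Picard–Lindelöf theorem. The key observation is that, because $K_{\varepsilon,R_*}$ is continuous, bounded by $a_2(\varepsilon)$, and compactly supported, the right-hand side of~\eqref{evolEqTrunc} is a locally Lipschitz map on $\mathcal{M}_{+,b}(\mathbb{R}_*)$. First I would rewrite~\eqref{evolEqTrunc} in mild/integral form as $f_t = f_0 + \int_0^t \Phi(f_s)\,ds$, where $\Phi(f)$ is the measure-valued vector field whose action on test functions is the right-hand side of~\eqref{eq:evol_eqWeak} without the time-derivative term. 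The quadratic terms $f\mapsto$ (gain) and $f\mapsto$ (loss) are bounded bilinear forms on $\mathcal{M}_{+,b}$: indeed $\|\text{gain}(f,g)\|\le \tfrac12 a_2(\varepsilon)\|f\|\|g\|$ and similarly for the loss term, and the source $\eta$ is a fixed bounded measure by~\eqref{eq:cond_eta}. Hence $\Phi$ is Lipschitz on balls, which gives a unique local solution by the standard contraction-mapping argument on $C([0,\tau],\mathcal{M}_{+,b})$ for $\tau$ small.

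Next I would establish the a priori bound that upgrades the local solution to a global one and yields the stated estimate. Testing the equation with $\varphi\equiv 1$ — more precisely, with a cutoff $\varphi_n\uparrow 1$ and passing to the limit, which is legitimate since all measures are supported in a fixed compact set once we prove the support property — the bracket $[\varphi(x+y)\zeta_{R_*}(x+y)-\varphi(x)-\varphi(y)]$ becomes $\le -1\le 0$ on the support of $K_{\varepsilon,R_*}$ (the gain term is bounded by the loss term), so $\frac{d}{dt}\int f(dx,t) \le \int \eta(dx) =: C$, giving $\int f(dx,t)\le \int f_0(dx) + Ct$. This bound is uniform in $T$ and independent of $f_0$ beyond its total mass, so the solution does not blow up in finite time and extends to all of $[0,T]$ for any $T$, and hence globally. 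The support property $f((0,1)\cup(2R_*,\infty),t)=0$ follows because: no mass is created below size $1$ (the source is supported in $[1,L_\eta]$ by~\eqref{eq:cond_eta} and coagulation only increases size), so testing with $\varphi$ supported in $(0,1)$ gives $\frac{d}{dt}\langle\varphi,f_t\rangle = -\langle\varphi,\text{loss}\rangle\le 0$ with $\langle\varphi,f_0\rangle=0$; and no mass is created above $2R_*$ because the cutoff $\zeta_{R_*}$ kills the gain term there while~\eqref{eq:K_eps_R3} and the loss term only remove mass, so the same argument applies on $(2R_*,\infty)$.

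Then I would verify nonnegativity: $f_t\in\mathcal{M}_{+,b}$ for all $t$. This is the point requiring a little care, since the Picard iteration does not obviously preserve positivity. The clean way is to split $\Phi(f) = \Phi^+(f) - L(f)\cdot f$ where $L(f)(x)=\int K_{\varepsilon,R_*}(x,y)f(dy)$ is a nonnegative bounded function (bounded by $a_2(\varepsilon)\|f\|$) and $\Phi^+(f)$ (gain plus source) maps $\mathcal{M}_{+,b}$ into itself; then solve the fixed-point problem in the positive cone using Duhamel's formula $f_t = e^{-\int_0^t L(f_s)ds}f_0 + \int_0^t e^{-\int_s^t L(f_r)dr}\Phi^+(f_s)\,ds$, which manifestly preserves $\mathcal{M}_{+,b}$ and is still a contraction on a short time interval. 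Uniqueness in the larger class $C^1([0,T],\mathcal{M}_{+,b})$ follows from the Lipschitz estimate by Grönwall. Finally, the equivalence between the classical solution and the weak solution of Definition~\ref{DefTimeSol} is a matter of testing the integral identity against $\varphi\in C^1([0,T],C_c(\mathbb{R}_*))$ and differentiating in $t$, using the $C^1$ regularity in time that the fixed-point construction provides (since $t\mapsto\Phi(f_t)$ is continuous, $f$ is automatically $C^1$).

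The main obstacle I anticipate is not any single estimate — each is routine given the strong truncation — but rather the bookkeeping needed to make the formal computation $\frac{d}{dt}\int f(dx,t)\le C$ rigorous: one must justify testing the weak formulation with the non-compactly-supported function $\varphi\equiv 1$. This is handled by first proving the support property $\supp f_t\subset[1,2R_*]$ (so that effectively all relevant test functions live on a fixed compact set) and only then taking $\varphi\equiv 1$; getting the logical order of these two steps right, so the argument is not circular, is the delicate point. Everything else reduces to the boundedness $K_{\varepsilon,R_*}\le a_2(\varepsilon)$ and the finiteness of $\eta$.
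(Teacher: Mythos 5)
Your proposal is correct and follows essentially the same route as the paper, whose proof is precisely a Banach fixed-point (contraction mapping) argument exploiting the boundedness and compact support of the truncated kernel $K_{\varepsilon,R_*}$. The additional details you supply (the Duhamel reformulation to preserve positivity, the a priori mass bound from testing with $\varphi\equiv 1$, and the support propagation) are the standard ingredients needed to make that one-line strategy rigorous.
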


To prove Prop. \ref{thm:existence_evolution} we observe that  since the kernel is bounded, the result may be obtained using Banach fixed-point theorem.

\begin{definition}
\label{DefStatSol} Let $\varepsilon>0$ and $R_* \geq L_\eta $. Suppose that $K_{\varepsilon,R_*}$ satisfies \eqref{eq:K_eps_R1}-\eqref{eq:K_eps_R3} and $\eta \in \mathcal{M}_{+}({\mathbb{R}}_{+})$ satisfies~\eqref{eq:cond_eta}.
We will
say that $f\in {\mathcal{M}_{+}({\mathbb{R}}_*)},$ satisfying $f((0,1) \cup (2R_*,\infty))=0$
is a stationary injection solution of \eqref{evolEqTrunc} if the following
identity holds for any test function $\varphi \in C_{c}\left( {\mathbb{R}}%
_*\right) $: 
\begin{eqnarray*}
0 &=&\frac{1}{2}\int_{{\mathbb{R}}_*}\int_{{\mathbb{R}}_*}K_{\varepsilon,R_*}\left(
x,y\right) \left[ \varphi \left( x+y\right) \zeta _{R_{\ast }}\left(
x+y\right) -\varphi \left( x\right) -\varphi \left( y\right) \right] f\left(
dx\right) f\left( dy\right)  \notag \\
&&+\int_{{\mathbb{R}}_*}\varphi \left( x\right) \eta \left( dx\right) .
\label{eq:evol_eqWeakSt}
\end{eqnarray*}
\end{definition}

We  denote by $S(t)$ the semigroup defined by the time-dependent solution $f$ obtained in Proposition \ref{thm:existence_evolution}, 
$$S(t) f_0 = f(\cdot, t)$$
that satisfies  the semigroup property
\begin{equation*}
S(t+s)f = S(t)S(s)f, \ t,s \in \R_+.
\end{equation*}
The operators $S(t)$ define mappings 
$$S(t): \mathcal{X}_{R_{\ast }} \to \mathcal{X}_{R_{\ast }}, \quad \text{ for each } t_1, t_2 \in \R_+$$
with $\mathcal{X}_{R_{\ast }}=\left\{ f\in \mathcal{M}_{+}({\mathbb{R}}_*):f\left( \left( 0,1\right) \cup \left( 2R_{\ast },\infty \right)\right) =0\right\}$.

\begin{proposition}
\label{thm:existence_truncated} Under the assumptions of Proposition~\ref%
{thm:existence_evolution}, there exists a stationary injection solution $\hat{f}%
\in \mathcal{M}_{+}({\mathbb{R}}_*)$ to \eqref{evolEqTrunc} as defined in
Definition~\ref{DefStatSol}.
\end{proposition}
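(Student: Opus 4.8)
The plan is to obtain the stationary injection solution $\hat f$ to the truncated equation \eqref{evolEqTrunc} as a fixed point of a suitable time-$T_0$ map built from the semigroup $S(t)$ of Proposition~\ref{thm:existence_evolution}, using Schauder's fixed point theorem. First I would fix a time $T_0>0$ (say $T_0=1$) and consider the affine-type map $\Phi := S(T_0)$ restricted to a convenient closed convex subset of $\mathcal{X}_{R_*}$. The estimate $\int_{\R_*} f(dx,t)\le \int_{\R_*} f_0(dx) + Ct$ from Proposition~\ref{thm:existence_evolution}, together with the fact that solutions stay supported in $[1,2R_*]$, shows that the ball $\mathcal{K}_M := \{ f \in \mathcal{X}_{R_*} : f(\R_*) \le M \}$ is mapped into itself once $M$ is chosen large enough relative to $C T_0$; here one needs a uniform-in-time upper bound on the total mass, which follows because the loss term in \eqref{evolEqTrunc} is bounded below by $a_1(\varepsilon)$ on $[1,2R_*]^2$ and thus controls the quadratic growth, giving an a priori bound of the form $\int f(dx,t) \le \max(\int f_0(dx), C/a_1(\varepsilon))$ or similar. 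So the first step is to identify $\mathcal{K}_M$, equipped with the weak-$*$ topology (or equivalently, since all measures are supported on the compact set $[1,2R_*]$, the topology induced by testing against $C([1,2R_*])$), as a compact convex set — compactness being Banach–Alaoglu on the dual of $C([1,2R_*])$ — that is invariant under $\Phi$.

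The second step is continuity of $\Phi = S(T_0)$ on $\mathcal{K}_M$ in this topology. This is where I expect the main work to lie: one must show that if $f_0^n \to f_0$ weakly-$*$ in $\mathcal{K}_M$ then $S(T_0)f_0^n \to S(T_0)f_0$ weakly-$*$. Because the kernel $K_{\varepsilon,R_*}$ is continuous and bounded with compact support and everything lives on the compact set $[1,2R_*]$, the quadratic operator $f \mapsto L^B_{\varepsilon,R_*}(f)$ appearing on the right of \eqref{evolEqTrunc} is continuous for the weak-$*$ topology (products of weakly-$*$ convergent bounded measures on a compact set converge when integrated against a continuous function of two variables), and the Banach fixed-point construction in Proposition~\ref{thm:existence_evolution} can be run with constants uniform over $\mathcal{K}_M$, yielding uniform-in-$n$ control and hence continuity of the flow map on $[0,T_0]$. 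One then applies Schauder's fixed point theorem to get $\hat f \in \mathcal{K}_M$ with $S(T_0)\hat f = \hat f$.

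The third step is to upgrade the time-$T_0$ fixed point to a genuine stationary injection solution in the sense of Definition~\ref{DefStatSol}. Using the semigroup property $S(t+T_0) = S(t)S(T_0)$ one gets $S(nT_0)\hat f = \hat f$ for all $n$, and then, exploiting uniqueness and continuity of $t \mapsto S(t)\hat f$ together with the semigroup property, $S(s)\hat f = \hat f$ for every $s \in \R_+$ (standard argument: the set $\{s : S(s)\hat f = \hat f\}$ is a closed subgroup of $\R$ containing $T_0$; combined with a short argument using the a priori bound one rules out the subgroup being discrete, or more simply one directly reruns Schauder on $S(T_0/2^k)$ and uses uniqueness to patch — I would use the cleanest available version). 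Once $S(s)\hat f = \hat f$ for all $s$, plugging a time-independent $\varphi \in C_c(\R_*)$ into the weak formulation \eqref{eq:evol_eqWeak} of Definition~\ref{DefTimeSol} makes the left-hand side vanish, leaving exactly the stationary identity of Definition~\ref{DefStatSol}; the condition $\hat f((0,1)\cup(2R_*,\infty)) = 0$ is inherited from membership in $\mathcal{X}_{R_*}$. The one genuine subtlety to be careful about is the invariance/compactness tension: the upper mass bound $\int f_0 + Ct$ grows in $t$, so one really must use the sharper stationary-type a priori bound coming from the loss term to close the self-map property on a fixed ball — I would isolate that estimate as the technical heart of the argument and present it first.
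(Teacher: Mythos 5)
Your first two steps coincide with the paper's proof: the invariant set is obtained exactly as you predict, from the differential inequality $\frac{d}{dt}\int_{[1,2R_*]} f(dx,t) \leq -\frac{a_1}{2}\bigl(\int_{[1,2R_*]} f(dx,t)\bigr)^2 + c_0$ driven by the lower bound \eqref{eq:K_eps_R2} on the kernel (so the ``technical heart'' you isolate is indeed the right one, and the naive bound $\int f_0 + Ct$ is not used for invariance), and compactness and convexity of $\mathcal{U}_M$ in the weak-$*$ topology plus continuity of $S(t)$ set up Schauder just as you describe.

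The genuine gap is in your third step. A fixed point of $S(T_0)$ is only a $T_0$-periodic point, and neither of your two proposed upgrades closes this. The set $\{s : S(s)\hat f = \hat f\}$ is an additively closed, topologically closed subset of $\R_+$ containing $T_0\N$, but nothing in the problem rules out its being discrete (there is no strict Lyapunov functional here to exclude genuine periodic orbits a priori), so the ``closed subgroup'' route stalls. Your alternative --- rerun Schauder on $S(T_0/2^k)$ --- is the right idea but cannot be finished ``using uniqueness to patch'': Schauder gives no uniqueness, and the fixed points $f_k$ of $S(T_0/2^k)$ obtained for different $k$ need not be related to one another or to $\hat f$. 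What the paper does instead is apply Schauder to $S(\delta)$ for \emph{every} $\delta>0$, use sequential compactness (metrizability) of $\mathcal{U}_M$ to extract a weak-$*$ limit $\hat f$ of fixed points $f_{\delta_n}$ with $\delta_n\to 0$, and then, for each $t$, write $S(t)f_{\delta_n}=S(n\delta_n)f_{\delta_n}=f_{\delta_n}$ along $\delta_n=t/n$; continuity of $S(t)$ gives $S(t)f_{\delta_n}\to S(t)\hat f$ while continuity of $s\mapsto S(s)f$ at $s=0$ gives $f_{\delta_n}\to\hat f$, whence $S(t)\hat f=\hat f$ for all $t$. So the limit $\delta\to 0$ of the family of approximate fixed points, not uniqueness, is the mechanism; once $S(t)\hat f=\hat f$ for all $t$, your final reduction to Definition~\ref{DefStatSol} by inserting a time-independent test function is exactly what the paper intends.
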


\begin{proof} [Idea of the proof]
The key point of the proof is to use Schauder fixed point theorem.
The first step is to obtain the existence of an invariant region for the evolution problem \eqref{eq:evol_eqWeak}. To that end, we choose a time independent test function $\varphi(x) = 1$ for $x \in [1, 2R_*]$. Using the lower bound for the kernel \eqref{eq:K_eps_R2} and that $f(\cdot , t)$ has support in $[1, 2R_*]$ we obtain the following estimate
\begin{equation*}
\frac{d}{dt}\int_{[1,2R_{\ast }]}f\left( dx,t\right) \leq -\frac{a_{1}}{2}%
\left( \int_{[1,2R_{\ast }]}f\left( dx,t\right) \right) ^{2}+c_{0}
\end{equation*}%
where $c_{0}=\int_{\R_*}\eta \left( dx\right) $.
This implies that for a large enough $M>0$, the set
\begin{equation*}
\mathcal{U}_{M}=\left\{ f\in \mathcal{X}_{R_{\ast }}:\int_{[1,2R_{\ast
}]}f(dx)\leq M\right\} \, . \label{eq:invariant_region}
\end{equation*}
is invariant under the time evolution \eqref{evolEqTrunc}.
Moreover, $\mathcal{U}_{M}$ is compact in the $\ast -$weak topology due to Banach-Alaoglu's Theorem (cf.\cite{Brezis}), since it is an intersection of a $\ast -$weak closed set $\mathcal{X}_{R_{\ast }}$ and the closed ball $\norm{f}\le M$.

The second step is to prove that for each $t>0$,  both maps $S(t):\mathcal{U}_{M} \to \mathcal{U}_{M}$ and $t \mapsto S(t) f_0$  are continuous in the $*-$weak topology. 

Finally, the third step of the proof reads as follows. Since for each $t$, the operator $S(t)$ is continuous and $\mathcal{U}_{M}$ is compact and convex when endowed with the $*-$weak topology, we can apply Schauder fixed point theorem to conclude that for all $\delta>0$ there is a fixed point $f_\delta$ of $S(\delta)$ in $\mathcal{U}_{M}$. Moreover, since $\mathcal{U}_{M}$ is metrizable and hence sequentially compact, there is a convergent sequence $\{ f_{\delta_n}\}_{n\in \N}$, i.e., there exists $\hat f \in \mathcal{U}_{M}$  such that $ f_{\delta_n} \to \hat f $ when $ \delta_n \to 0$ in the $*-$weak topology. 
For each $t$ we choose $\delta_n = t/n$. Using the semigroup property  we obtain that 
$S(t)f_{\delta_n} =S(n\delta_n)f_{\delta_n}=S(\delta_n)f_{\delta_n}$.
Using the continuity of $t \mapsto S(t) f_0$ and the fact that $S(0) \hat f= \hat f$, we obtain $S(\delta_n)f_{\delta_n} \to \hat f$.
On the other hand using the continuity of $S(t)$ we obtain that $S(t)f_{\delta_n} \to S(t) \hat f$. 
Therefore $S(t)\hat f = \hat f$ and thus $\hat f$ is a stationary solution to \eqref{evolEqTrunc}, which concludes the proof.
\end{proof}

The next Lemma provides uniform estimates for integrals. 

\begin{lemma}\label{lem:bound}
Let $a>0$, $R\ge a$ and $b \in (0,1)$ be such that $bR >a$. Suppose $f \in \mathcal{M_+}(\R_*)$, $\varphi \in C(\R_*)$,
$g \in L^1(\R_*)$, and $g,\varphi\ge 0$. If
\begin{equation*}\label{eq:avg}
\frac{1}{z}\int_{[bz,z]} \varphi(x) f(dx) \leq g(z)\,, \quad \text{for } z \in [a,R] \,,
\end{equation*}
then
\begin{equation*}\label{eq:bound}
\int_{[a,R]} \varphi(x) f(dx) \leq \frac{\int_{[a,\infty)}g(z)dz}{\ln (b^{-1})}
+ R g(R)\,.
\end{equation*}
\end{lemma}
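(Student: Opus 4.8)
The plan is to exploit the hypothesis at a geometric sequence of scales $z_k = b^k R$ descending from $R$, so that the dyadic-type intervals $[bz_k, z_k] = [z_{k+1}, z_k]$ tile $[a,R]$ up to a boundary remainder, and then to bound the sum of the resulting estimates by the integral $\int_{[a,\infty)} g(z)\,dz$ via a comparison of the sum with a Riemann integral. First I would set $z_0 = R$ and $z_{k+1} = b z_k = b^{k+1} R$, and let $N$ be the largest integer with $z_N \ge a$ (equivalently $bR > a$ guarantees $N \ge 1$, and $N$ is finite since $b<1$). The intervals $[z_{k+1}, z_k]$ for $k=0,1,\dots,N-1$, together with the leftover piece $[a, z_N]$, cover $[a,R]$. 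On each $[z_{k+1},z_k] \subset [bz_k, z_k]$ the hypothesis with $z = z_k \in [a,R]$ gives
\begin{equation*}
\int_{[z_{k+1},z_k]} \varphi(x)\, f(dx) \le \int_{[bz_k,z_k]} \varphi(x)\, f(dx) \le z_k\, g(z_k),
\end{equation*}
and for the leftover interval, since $[a,z_N] \subseteq [b z_N, z_N]$ would require $a \ge bz_N$, I would instead note $z_N \ge a$ and $z_{N+1} = bz_N < a$, so $[a, z_N] \subseteq [bz_N, z_N]$ and the same hypothesis with $z = z_N$ applies, giving $\int_{[a,z_N]}\varphi f \le z_N g(z_N)$. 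The case $k=0$ must be handled slightly apart because $[bR, R] = [z_1, z_0]$ already contributes $R g(R)$; this is exactly the isolated term $R g(R)$ appearing in the conclusion.

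Summing over all pieces yields
\begin{equation*}
\int_{[a,R]} \varphi(x)\, f(dx) \le \sum_{k=0}^{N} z_k\, g(z_k) = R\,g(R) + \sum_{k=1}^{N} z_k\, g(z_k).
\end{equation*}
The remaining task is to dominate $\sum_{k=1}^{N} z_k g(z_k)$ by $\big(\ln(b^{-1})\big)^{-1}\int_{[a,\infty)} g(z)\,dz$. For this I would compare the term $z_k g(z_k)$, for $k \ge 1$, with the integral of $g$ over the interval $[z_k, z_{k-1}]$. Since $g \ge 0$ this requires a monotonicity-free argument: the clean way is to observe $\ln z_{k-1} - \ln z_k = \ln(b^{-1})$, write $z_k g(z_k) = \frac{1}{\ln(b^{-1})}\int_{\ln z_k}^{\ln z_{k-1}} z_k g(z_k)\,ds$, and relate this to $\int_{z_k}^{z_{k-1}} g(z)\,\frac{dz}{z}\cdot z_k$. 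Because $z_k \le z$ on $[z_k, z_{k-1}]$, this gives $z_k g(z_k) \cdot \frac{1}{z}\cdot z \le \dots$; rather than rely on pointwise bounds on $g$, I expect the intended route is to integrate the hypothesis itself: applying the averaged bound $\frac{1}{z}\int_{[bz,z]}\varphi f \le g(z)$ and integrating $\frac{dz}{z}$ over $z \in [a,R]$, then using Fubini on the left to recognize that each point $x$ is covered by the $z$-intervals $\{z : bz \le x \le z\} = [x, b^{-1}x]$, whose logarithmic length is exactly $\ln(b^{-1})$.

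So the cleaner execution I would actually carry out is: integrate $\frac{1}{z}\int_{[bz,z]}\varphi(x) f(dx) \le g(z)$ against $\frac{dz}{z}$ over $[a,R]$ — wait, dimensions; integrate against $dz$ is wrong, integrate the inequality $\int_{[bz,z]}\varphi f \le z g(z)$ against $\frac{dz}{z}$ over $z\in[a,R]$. By Tonelli the left side equals $\int_{\R_*}\varphi(x)\Big(\int_{\{z\in[a,R]:\, bz\le x\le z\}} \frac{dz}{z}\Big) f(dx)$; for $x\in[a,bR]$ the inner $z$-range is all of $[x,\min(b^{-1}x,R)] = [x,b^{-1}x]$ of log-length $\ln(b^{-1})$, while for $x$ near the endpoints the range is shorter but still nonnegative, so restricting to $x\in[a,bR]$ gives $\ln(b^{-1})\int_{[a,bR]}\varphi f \le \int_a^R g(z)\,dz \le \int_{[a,\infty)}g$. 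Finally, $\int_{[a,R]}\varphi f = \int_{[a,bR]}\varphi f + \int_{(bR,R]}\varphi f$, and the second term is bounded by $R g(R)$ using the hypothesis at $z=R$ (since $(bR,R]\subset[bR,R]$). Combining gives the claim. The main obstacle is precisely this Fubini/log-length computation and correctly tracking that the endpoint pieces of the $x$-range only shrink the inner integral (so the inequality survives), together with isolating the boundary term $R g(R)$ cleanly; everything else is bookkeeping.
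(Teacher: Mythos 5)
The paper itself states Lemma~\ref{lem:bound} without proof (it is quoted from \cite{FLNV19}), so there is no in-text argument to compare against; judged on its own, your final executed argument is correct and is the natural proof of this statement. Integrating $\int_{[bz,z]}\varphi\,df\le z\,g(z)$ against $dz/z$ over $[a,R]$, applying Tonelli (everything is nonnegative and jointly measurable), and noting that for $x\in[a,bR]$ the fibre $\{z\in[a,R]: bz\le x\le z\}=[x,b^{-1}x]$ has logarithmic length exactly $\ln(b^{-1})$ gives $\ln(b^{-1})\int_{[a,bR]}\varphi\,df\le\int_a^Rg\le\int_{[a,\infty)}g$; the remainder $\int_{(bR,R]}\varphi\,df\le Rg(R)$ follows from the hypothesis at $z=R$, and $bR>a$ guarantees the two pieces tile $[a,R]$. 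You were also right to abandon your first plan: summing $z_kg(z_k)$ over the geometric scales cannot be dominated by $\int g$ for a general nonnegative $L^1$ function $g$ without monotonicity, so only the Fubini route closes the argument.
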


We now extend the previous existence result to general unbounded kernels $K$ supported in $\R_+^2$ and satisfying the conditions of the theorem \ref{thm:existence}.

\begin{proof}[Idea of the proof of Theorem \ref{thm:existence}]
Let $f_{\varepsilon ,R_{\ast }}$ be a stationary injection solution to \eqref{evolEqTrunc} as  in
Definition~\ref{DefStatSol} provided by Proposition \ref{thm:existence_truncated}. The idea is to obtain estimates that are independent on both $\varepsilon$ and $R_{\ast }$ that allow to pass to the limit as $\varepsilon \to 0$ and $R_{\ast } \to \infty$ and to obtain the existence of a stationary injection solution to the original problem as defined in \eqref{DefFluxSol}.

First we obtain an estimate uniform in $R_*$:
\begin{equation*}
\int_{[0,2R_*/3]}f_{\varepsilon ,R_{\ast }}(dx)\leq \bar C_{\varepsilon },\ \ \ R_*> 0 ,
\label{EstTruncFunction}
\end{equation*}
where $\bar C_{\varepsilon }$ is a constant independent on $R_{\ast }.$ 
This estimate
implies, that taking a subsequence if needed, there exists $f_{\varepsilon}\in \mathcal{M}_{+}\left( \mathbb{R}_{+}\right) $ such that $f_{\varepsilon}\left( \left[ 0,1\right) \right) =0$ and:
\begin{equation*}
f_{\varepsilon ,R_{\ast }^{n}}\rightharpoonup f_{\varepsilon }\text{ as } n\rightarrow \infty \text{ in the }\ast -\text{weak topology}
\label{fWeakLimit}
\end{equation*}
with $R_{\ast }^{n}\rightarrow \infty $ as $n\rightarrow \infty .$
 For any bounded continuous test function $\varphi: [0,\infty) \to \R$, one proves that $f_{\varepsilon }$ satisfies 
\begin{equation*}
\frac 1 2\int_{\lbrack 0,\infty )^{2}}K_{\varepsilon }\left( x,y\right) [\varphi
(x+y)-\varphi (x)-\varphi (y)]f_{\varepsilon }\left( dx\right)
f_{\varepsilon }\left( dy\right) +\int_{[0,\infty )}\varphi (x)\eta \left(
dx\right) =0.  \label{WeakFormEps}
\end{equation*}
where $K_{\varepsilon }$ is defined in \eqref{eq:K_eps_R4}.

Second, we obtain  estimates independent on $\varepsilon$:
\begin{equation}
\frac{1}{z}\int_{\left[  \frac{2z}{3},z\right]  }f_{\varepsilon}\left(
dx\right)  \leq\frac{\tilde{C}}{z^{\frac{3}{2}}}\left(  \frac{1}{\min\left(
z^{\gamma},\frac{1}{\varepsilon}\right)  }\right)  ^{\frac{1}{2}}, \label{A1} 
\end{equation}
and
\begin{equation*}
\frac{1}{z}\int_{\left[  \frac{2z}{3},z\right]  }f_{\varepsilon}\left(
dx\right)  \leq\frac{\tilde{C}}{z^{\frac{3}{2}}\sqrt{\varepsilon}} \label{A2}
\end{equation*}
where $\tilde{C}$ is independent on $\varepsilon$.
This estimate yields $\ast -$weak compactness of the family of measures $\left\{
f_{\varepsilon }\right\} _{\varepsilon >0}$ in $\mathcal{M}_{+}\left( 
\mathbb{R}_{+}\right) .$ Therefore, there exists $f\in \mathcal{M}_{+}\left( 
\mathbb{R}_{+}\right) $ such that:%
\begin{equation*}
f_{\varepsilon _{n}}\rightharpoonup f\text{ as }n\rightarrow \infty \text{
in the }\ast -\text{weak topology}  \label{fepsWeakLimit}
\end{equation*}%
for some subsequence $\left\{ \varepsilon _{n}\right\} _{n\in \mathbb{N}}$
with $\lim_{n\rightarrow \infty }\varepsilon _{n}=0.$
Using \eqref{eq:K_eps} and Lemma~\ref{lem:bound}, one can prove that $f_{\varepsilon }$ satisfies \eqref{eq:stationary_eq} for any $\varphi \in C_{c}\left( \mathbb{R}_{+}\right)$. In particular, $f\neq 0$ due to $\eta \neq 0$.

It only remains to prove (\ref{eq:moment_cond}).
Taking the limit of (\ref{A1}) as $\varepsilon \rightarrow 0$ we arrive at:
\begin{equation*}
\frac{1}{z}\int_{[2z/3,z]}f(dx)\leq \frac{\widetilde C}{z^{3/2+\gamma /2}}\ \ \text{ for
all }z\in (0,\infty ),
\end{equation*}
which implies 
\begin{equation*}
\frac{1}{z}\int_{[2z/3,z]}x^{\mu} f(dx)\leq \widetilde C \frac{z^{\mu} }{z^{3/2+\gamma /2}}\ \ \text{ for
all }z\in (0,\infty ),
\end{equation*}
for any $\mu \in \R$.
From Lemma~\ref{lem:bound} we obtain the boundedness of the  moment of order $\mu$:
\begin{equation*}\label{eq:moment_mu}
\int_{[0,\infty)}x^{\mu} f(dx)< \infty.
\end{equation*}
 for any $\mu$ satisfying $\mu < \frac{\gamma + 1}{2}$. In particular, since $|\gamma + 2\lambda| <1$, then the moments  $\mu = -\lambda$ and $\mu = \gamma + \lambda$ are bounded, which proves (\ref{eq:moment_cond}).
\end{proof}

\subsubsection{Nonexistence of stationary solutions}
\label{sec:non-exist}

We present the main ideas of the proof of Theorem  \ref{thm:NonExistence} obtained in \cite{FLNV19}.
The proof is done by contradiction. Let the kernel $K$ satisfy the power law bounds \eqref{eq:K_growth1}-\eqref{eq:K_growth2} with $|\gamma+2\lambda| \geq 1 $. Suppose that  $f\in \mathcal{M}_{+}\left( \mathbb{R}_*\right)$  is a stationary injection solution of \eqref{eq:smol} in the sense of Definition \ref{DefFluxSol}. Then $f$ satisfies the weak formulation \eqref{eq:stationary_eq} as well as the condition on the moments \eqref{eq:moment_cond}.

The first step is to rewrite \eqref{eq:stationary_eq} using the flux formulation.   
Consider the function $J:\mathbb{R}_{*}\rightarrow \mathbb{R}_{+}$ defined by
\begin{equation}
J\left( R\right) =\iint_{\Sigma _{R}}K\left( x,y\right) xf\left(dx\right) f\left( dy\right)  \label{S4E5a}
\end{equation}
where 
\begin{equation*}
\Sigma _{R}=\left\{ x\geq 1,\ y\geq 1:x+y>R,\ x\leq R\right\}.
\end{equation*}
Let $\varepsilon > 0,\ R \geq 1 $ and $\chi_\varepsilon \in C^\infty(\R_+)$ satisfy $\chi_\varepsilon (x) = 1,\ x \leq R$ and $\chi_\varepsilon(x) = 0,\ x \geq R+\varepsilon $. Choosing a test function $\varphi(x) = x \chi_\varepsilon(x) $ we obtain from~\eqref{eq:stationary_eq}
 the flux formulation
\begin{equation*}
J(R) = \int_{[1,R]} x \eta (dx), \ R\geq 1. 
\end{equation*}
We note that $J$ describes the flux of particles passing through $\Sigma _{R}$ and that this flux is constant for all $R\geq L_\eta$ and equal to $J\left( L_\eta\right) = \int_{[1,\infty)} x \eta (dx)>0$, i.e., 
\begin{equation*}
J\left( R\right) = J(L_\eta),\ R\geq L_\eta.
\end{equation*} 
The second step is to prove that the main contribution to the integral \eqref{S4E5a} as $R \to \infty$ is due to collisions between particles of size close to $R$ and particles of size of order $1$.
To that end, for a given $\delta>0$
small, we consider a partition of $\Sigma_R = D_{\delta}^{\left( 1\right) } \cap D_{\delta}^{\left( 2\right) }$ such that
\begin{align*}
D_{\delta}^{\left( 1\right) } & =\left\{ x\geq1,\ y\geq1:y\leq\delta x\right\} \ , \\
D_{\delta}^{\left( 2\right) } & =\left\{ x\geq1,\ y\geq1:y>\delta x\right\} \ 
\end{align*}
and we define 
\begin{equation*}
J_k(R)  = \iint_{\Sigma _{R} \cap D_{\delta}^k}K\left( x,y\right) xf\left(dx\right) f\left( dy\right),\ k=1,2.
\end{equation*}
Therefore
\begin{equation*}
J(R) = J_1(R) + J_2(R). 
\end{equation*}
Using the upper bound for the kernel \eqref{eq:K_growth2}, the moment condition \eqref{eq:moment_cond} and the fact that $\Sigma _{R} \cap D_{\delta}^2 \subset [1,R]\times [\frac{\delta R}{1+\delta },\infty )$ one concludes after some computations that the contribution of $J_2$ vanishes as $R\to \infty$, i.e.,
\begin{equation*}
\lim\limits_{R \to \infty} J_2(R) = 0
\end{equation*}
which implies that 
\begin{equation*}
\lim\limits_{R \to \infty} J_1(R) = \lim\limits_{R \to \infty} J(R) = J(L_\eta).
\end{equation*}

In the remainder of the proof we will use the notation $a:= \gamma+\lambda$ and $b:= -\lambda$ if $(\gamma + 2\lambda) \geq 1$, or $a:= \gamma+\lambda$ and $b:= -\lambda$ if $(\gamma + 2\lambda) \leq -1$. Then, the assumption \eqref{eq:moment_cond} may be rewritten as
\begin{equation}
\int_{\R_* }x^af\left( dx\right)  <\infty.
\label{eq:moment_cond_2}
\end{equation}
The third step of the proof consists in obtaining a lower bound for the fluxes that implies a lower bound for the number of particles in some region of the size space. Using the upper bound for the kernel \eqref{eq:K_growth2} we obtain after some computations
\begin{equation}\label{eq:liminf}
\liminf_{R\rightarrow \infty }\left( R^{a +1}\iint_{\Sigma
_{R}\cap D_{\delta }^{\left( 1\right) }} y^{b }f\left( dx\right)
f\left( dy\right) \right) \geq \frac{J\left( L_\eta\right) }{c_3\left( 1+\delta
^{|a-b| }\right) } \ .
\end{equation}
For $R$ sufficiently large we have that
\begin{equation*}
\Sigma_{R}\cap D_{\delta}^{\left( 1\right) }\subset\left\{ \left( x,y\right)
:1\leq y\leq\delta R,\ R<x+y,\ x\leq R\right\}
\end{equation*}
whence, \eqref{eq:liminf} implies the inequality
\begin{equation}
\int_{\left[ 1,\delta R\right] }y^{b}f\left( dy\right) \int_{\left(
R-y,R\right] }f\left( dx\right) \geq\frac{J\left( L_\eta\right) }{2c_3\left(
1+\delta^{|a-b|}\right) }\frac{1}{R^{a+1}}
\label{S4E7}
\end{equation}
 for $R\geq R_{0}$ with $R_{0}$ large enough.
We now consider two cases separately $a \geq 0 $ and $a<0$. 
Let first $a \geq 0 $. 
Due to \eqref{eq:moment_cond_2} we may define 
\begin{equation}
F\left( R\right) =\int_{\left( R,\infty\right) }f\left( dx\right) \ \ ,\ \
R\geq1 \ . \label{S4E7a}
\end{equation}
Using \eqref{S4E7a} we can rewrite \eqref{S4E7} as
\begin{equation*}
-\int_{\left[ 1,\delta R\right] }\left[ F\left( R-y\right) -F\left( R\right) %
\right] y^{b}f\left( dy\right) \leq -\frac{J\left( L_\eta\right) }{%
2c_3\left( 1+\delta ^{|a-b| }\right) }\frac{1}{R^{a +1}}\ \ \text{for }R\geq R_{0}.
\end{equation*}
Then, using a comparison argument (see Lemma 4.1 in \cite{FLNV19}), for some constant $B>0$, it follows that
\begin{equation}
F\left( R\right) \geq \frac{B}{R^{a}}\ \ \text{if \ }
R\geq R_{0},\ \text{ for } a>0,  \label{S5E3_a}
\end{equation}
and
\begin{equation}
F\left( R\right) \geq B \log(R)\ \ \text{if \ }
R\geq R_{0},\ \text{ for } a=0.  \label{S5E3_b}
\end{equation}
In the case $a>0$, \eqref{S5E3_a} implies
$$B \leq R^a F(R) \leq \int_{(R,\infty)} x^a f(dx)$$
Taking the limit when $R \to \infty$ and using \eqref{eq:moment_cond_2}  it follows that $B\leq 0$, which leads to a contradiction.
In the case $a=0$, the contradiction follows from \eqref{S5E3_b} in a similar way using \eqref{eq:moment_cond_2}. 

Let now $a <0$. We define the function $F$ by
\begin{equation}
F\left( R\right) =\int_{[1,R] }f\left( dx\right) \ \ ,\ \
R\geq1.  \label{S4E7aa}
\end{equation}
Using (\ref{S4E7aa}) we can rewrite (\ref{S4E7}) as: 
\begin{equation}
-\int_{\left[ 1,\delta R\right] }\left[ F\left( R\right) -F\left( R-y\right) %
\right] y^{b }f\left( dy\right) \leq -\frac{J\left( L_\eta\right) }{%
2c_3\left( 1+\delta ^{|a-b| }\right) }\frac{1}{R^{a +1}}\ \ \text{for }R\geq R_{0}.\nonumber
\end{equation}
As in the previous case, using a comparison argument (see Lemma 4.2 in \cite{FLNV19}), it follows that 
there is $B>0$ such that 
\begin{equation*}
F\left( R\right) \geq \frac{B}{R^{a}}\ \ \text{if \ }
R\geq  R_{0} \, .  \label{S5E3a}
\end{equation*}
For a small $\varepsilon>0$ satisfying $\varepsilon < B$ there exists $M$ such that
\begin{equation*}
\int_{[M,\infty)} x^a f(dx) = \varepsilon \ .\label{S5E8a}
\end{equation*} 
Then for all $R>M$ we have 
$$B \leq  R^a \int_{[1,R]} f(dx)  
\leq  R^a \int_{[1,M]} f(dx) + \int_{[M,R]} x^{a} f(dx)
\leq   R^a \int_{[1,M]} f(dx) + \varepsilon.
$$
Since $a<0$, taking the limit as $R \to \infty$ we obtain $B \leq \varepsilon$, which leads to a contradiction.

\bigskip

\section{Discrete multi-component coagulation equation with constant kernel}
\label{sec:multi}

In this Section we consider discrete coagulation equations where the particle size is a discrete variable  representing the number of monomers.  
In addition, we consider that  particles may be composed of different types of monomers.
A particle with $d$ components is  described by a vector $\alpha=(\alpha_i)_{i=1,...,d} \in \N_+^d\backslash \{ 0 \}$ where $\alpha_i$ represents the size of the $i$th component.
 The number density $n_\alpha(t)$ of particles with composition $\alpha$ at time $t \geq 0$ satisfies the {\it discrete multi-component coagulation equation}
\begin{equation}\label{eq:multi_smol}
\partial_t n_\alpha(t) = \frac{1}{2}\sum_{0<\beta<\alpha} K_{\alpha-\beta,\beta}n_{\alpha-\beta}(t) n_\alpha(t) - \sum_{\beta>0} K_{\alpha,\beta}n_\alpha(t) n_\beta(t)
\end{equation} 
where $K_{\alpha,\beta}$ represents the coagulation rate between  particles with compositions $\alpha$ and $\beta$. We use $\alpha < \beta$ to denote $\alpha_i \leq \beta_i,\ i=1,...,d,$ and $\alpha \neq \beta$. 

The continuous counterpart of the discrete multi-component coagulation equation may be written as  \eqref{eq:multi_smol}  by substituting the sums by multidimensional integrals and by taking the discrete vector $\alpha \in \N_+^d\backslash \{ 0 \}$  as a continuous variable $\alpha \in \R_+^d \backslash \{0\}$.

In the following Sections, we consider discrete multi-component equations with the constant kernel $K_{\alpha,\beta} = 2$. We obtain an explicit time-dependent solution to \eqref{eq:multi_smol} and an explicit stationary solution in the presence of a source, by following the computations presented in \cite{KBN}. Similar results could also be obtained in the case of continuous multi-component equations.

\subsection{Mass localization in time-dependent solution}
\label{sec:multi_time_dep}

We consider the multi-component discrete equation  with  constant kernel $K_{\alpha,\beta} = 2$,
\begin{equation}\label{eq:discrete}
\frac{d n_\alpha(t)}{dt}(t) = \sum_{\beta <\alpha} n_\beta(t) n_{\alpha-\beta}(t) - 2 n_\alpha(t) \sum_{\beta>0} n_\beta(t)
\end{equation}
and initial condition
\begin{equation}\label{eq:disc_ini}
n_\alpha(0) = \frac{1}{d} \sum_{|\beta |=1} \delta_{\alpha,\beta}
\end{equation}
where $\delta_{\alpha,\beta} = 1$ if $\alpha=\beta$ and  $\delta_{\alpha,\beta} = 0$ otherwise. We recall that $|\beta| = \sum_{i=1}^d \beta_i$.

\begin{remark}
Note that the initial condition \eqref{eq:disc_ini} is supported at monomers and its mass is  uniformly distributed by the types of particles. The initial mass of each type of particle is   $\frac{1}{d}$.
\end{remark}

Existence and uniqueness of a solution to \eqref{eq:discrete}-\eqref{eq:disc_ini} in the one-component case $d=1$ is proven in \cite{MP04}  for any initial condition satisfying $\sum_{\alpha = 1}^\infty n_\alpha(0)<\infty$ using Laplace transforms. An explicit solution to the multi-component problem has been obtained in \cite{KBN} and \cite{L} using a generating function.

In this Section we review the computations described in \cite{KBN} to obtain an explicit solution to \eqref{eq:disc_sta}. We then study the long-time behaviour using an approximation of the solution for large times and large sizes as in \cite{KBN}. In particular, we observe the phenomenon of mass localization along a straight line in the size space.

Multiplying \eqref{eq:disc_sta} formally by a test function $\psi_\alpha $  and summing in $\alpha$ we obtain the weak formulation
\begin{equation}\label{eq:weak_dic}
\partial_t \sum_{\alpha=1}^\infty \psi_\alpha n_\alpha(t) =  \sum_{\alpha,\beta=1}^\infty [\psi_{\alpha+\beta} - \psi_\alpha - \psi_\beta ] n_\alpha(t)n_\beta(t).
\end{equation}
The solution  may be obtained using the generating function defined next. 
 
The generating function  $F:D\times \R_+ \to \R$ associated to a sequence $\{ n_\alpha(t)\}_{\alpha\in \N^d \backslash \{ 0\} }$ is defined by
\begin{equation}\label{eq:F}
F(z,t) = \sum_{\alpha>0} z^\alpha n_\alpha(t)
\end{equation}
where $D \subset \R^d$ is the domain of convergence of the series and  $z^\alpha= z_1^{\alpha_1} z_2^{\alpha_2}...z_d^{\alpha_d}$.

Using $\psi_\alpha = z^\alpha$ in \eqref{eq:weak_dic} we obtain an equation for the generating function $F$,
\begin{equation}\label{eq:F_eq}
\partial_t F(z,t) = F(z,t)^2 - 2F(z,t)N(t)
\end{equation}
where $N(t) = F(0,t) = \sum_{\alpha>0} n_\alpha(t)$ is the total number of particles at time $t$. From \eqref{eq:disc_ini} the initial number of particles is $N(0)=1$. Using $\psi_\alpha = 1$ in \eqref{eq:weak_dic} we obtain an equation for $N$,
\begin{equation}\label{eq:N}
\partial_t N(t) = -N^2(t),\ N(0)=1 \quad \iff \quad N(t)= \frac{1}{1+t}.
\end{equation}
If we subtract equations \eqref{eq:F_eq} and \eqref{eq:N} we obtain an equation for $F-N$. More precisely, we get $\partial_t (F-N) = (F-N)^2$. Solving this equation and using \eqref{eq:N} yields an expression for $F$ 
\begin{equation}\label{eq:F_expression}
F(z,t) = \frac{F_0(z)}{(1+t)(1+t-tF_0(z))}
\end{equation}
where $F_0(z) = F(z,0)$ is given by $F_0(z) = \frac{1}{d} \sum_{i=1}^d z_i$ after substituting \eqref{eq:disc_ini} in \eqref{eq:F}. The expression for $F$ will  be used in the following to determine the solution to \eqref{eq:discrete}. 

We note that if $\{ n_{\alpha } \}_{\alpha>0}$  is a solution to the multi-component coagulation equation \eqref{eq:discrete}, then $\{ n_{|\alpha |} \}_{\alpha>0}$, where $|\alpha| =  \sum_{i=1}^d \alpha_i$ is the sum variable and $n_{|\alpha|}$ is defined by $n_{|\alpha|} = \sum_{\beta>0} n_\beta \delta_{|\alpha|,|\beta|}$,  is a solution to the one-component equation with constant kernel $K_{|\alpha|,|\beta|}=2$ and initial condition $n_{|\alpha|}(0) = \delta_{|\alpha|,1}$. This result may be obtained using the weak formulation \eqref{eq:weak_dic} with a test function of the form $\psi_\alpha = \varphi_{|\alpha|}$. 
We first solve the one-component equation to find an expression for $\{ n_{|\alpha |} \}_{\alpha>0}$.

We consider the generating function $f: D\times \R_+ \to \R$, $D \subset \R$, associated to the one-component problem 
\begin{equation}\label{eq:F_oneComp}
f(z,t) = \sum_{|\alpha|=1}^\infty z^{|\alpha|}n_{|\alpha|}(t),
\end{equation}
which may be expressed by \eqref{eq:F_expression} with $f_0(z,t) = \sum_{|\alpha|=1}^\infty z^{|\alpha|}n_{|\alpha|}(0) = z$, i.e. 
\begin{equation}
f(z,t) = \frac{z}{(1+t)(1+t-tz)}
\end{equation}
Using the Taylor series, we expand $f$ around $z=0$ and obtain
\begin{equation}\label{eq:F_series}
f(z,t) = \sum_{k=1}^\infty z^k\frac{t^{k-1}}{(1+t)^{k+1}}, \quad z < \frac{1+t}{t},\ t>0.
\end{equation}
Comparing each term of the two series \eqref{eq:F_series} and \eqref{eq:F_oneComp} we conclude that the solution to the one-component equation is 
\begin{equation}\label{eq:sol_1d}
n_{|\alpha|}(t) = \frac{t^{|\alpha|-1}}{(1+t)^{|\alpha|+1}}.
\end{equation}

The solution to the multi-component equation \eqref{eq:discrete} can now be computed by expanding \eqref{eq:F_expression} and comparing with \eqref{eq:F}.
Using the Taylor series in several variables we obtain the expansion of \eqref{eq:F_expression} around $0$,
\begin{equation}
f(z,t) = \sum_{k=1}^\infty \frac{1}{d^k}\frac{t^{k-1}}{(1+t)^{k+1}} |z|^k, \quad |z| < d \frac{1+t}{t},\ t>0.
\end{equation}
Comparing with \eqref{eq:F_oneComp} and using \eqref{eq:sol_1d} and the fact that $(z_1+...+z_d)^k = \sum\limits_{|\alpha|=k} \frac{k!}{\alpha_1!\alpha_2!...\alpha_d!} z_1^{\alpha_1} z_2^{\alpha_2}...z_d^{\alpha_d}$ we finally obtain the solution to the multi-component coagulation equation \eqref{eq:discrete} expressed in terms of $n_{|\alpha|}$,
\begin{equation}\label{eq:sol_multiD}
n_\alpha(t)  = n_{|\alpha|}(t)g(\alpha) \quad \text{ with }\quad g(\alpha) =  \frac{1}{d^{|\alpha|}}\frac{|\alpha|!}{\alpha_1!\alpha_2!...\alpha_d!}.
\end{equation}

To study the long time behaviour, we use the fact that $\lim (\frac{1+t}{t})^t = e$ to obtain an approximation for $n_{|\alpha|}(t)$ for large $|\alpha|$ and large time $t$ 
\begin{equation}\label{eq:approx_n1d}
n_{|\alpha|}(t) \approx t^{-2} \exp(-\frac{|\alpha|}{t}). 
\end{equation}
\begin{remark}
In \cite{MP04} it  is shown that the solution to the continuous one-component equation with constant kernel does approach the form $f(x,t)= t^{-2} \exp\left(-\frac{x}{t}\right)$ for large times,  provided the initial mass is either finite, which includes the case treated in this Section, or its mass distribution function diverges sufficiently weakly.
\end{remark}
We also consider an approximation of the function $g$
\begin{equation}\label{eq:approx_g}
g(\alpha) \approx |\alpha|^{-(d-1)/2} \exp(-\frac{|\alpha|_-^2}{2|\alpha|})
\end{equation}
where $|\alpha|_-^2 = \frac{1}{d} \sum\limits_{i,j=1}^d (\alpha_i-\alpha_j)^2$ denotes the generalized mass difference variable.
Using \eqref{eq:approx_n1d} and \eqref{eq:approx_g} in \eqref{eq:sol_multiD} we obtain for large $t$ and $|\alpha|$ the approximation
\begin{equation}
n_\alpha(t)  \approx t^{-2} |\alpha|^{-(d-1)/2} \exp(-\frac{|\alpha|}{t})\exp(-\frac{|\alpha|_-^2}{2|\alpha|}).\label{eq:n_approx}
\end{equation}
We observe that, besides the mass scale $|\alpha | \sim t$ imported from the solution to the one-component equation,  there is a second mass scale given by $|\alpha|_- \sim \sqrt{t}$. 
Introducing the variables $\xi = \frac{|\alpha|}{t}$ and $\rho =  \frac{|\alpha|_-}{\sqrt{t}}$ we may then write the solution in a scaling form 
\begin{equation}
n_\alpha(t)  \approx t^{-(d+3)/2} \phi(\xi,\rho)
\end{equation}
where
\begin{equation}
 \phi(\xi,\rho) = \xi^{-(d-1)/2} \exp(-\xi)\exp(-\frac{\rho^2}{2\xi}).
\end{equation}
Finally we note from \eqref{eq:n_approx} that for any fixed time, $n_\alpha(t)$ reaches maximum values when $|\alpha|_-^2 = 0$. This condition defines a straight line in the size space given by $\{\alpha \in \N_+^d\ |\ \alpha_1 = \alpha_2 = ... =\alpha_d\}$, indicating that mass concentrates along a line for large times and large sizes.

\subsection{Mass localization in stationary solutions}
\label{sec:multi_stat}

Mass localization is also  observed in stationary solutions to coagulation equations with source by applying a similar analysis as in the previous Section. 
We consider the  stationary multi-component  coagulation equation with source and constant kernel  $K_{\alpha,\beta}=2$, 
\begin{equation}\label{eq:disc_sta}
0 = \sum_{\beta<\alpha } n_{\alpha-\beta}n_\beta -2\sum_{\beta>0} n_\alpha n_\beta + s_\alpha
\end{equation}
where $s_\alpha$ is the source term. In analogy to the initial conditions in the time-dependent case \eqref{eq:disc_ini}, the source term is given by
\begin{equation}
 s_\alpha = \frac{h}{d} \sum_{|\beta|=1} \delta_{\alpha,\beta},
\end{equation}
for some given $h>0$.

The constant kernel belongs to the class of kernels  considered in Section \ref{sec:onecomponent}. In particular, the constant kernel belongs to the subclass of kernels for which there is a stationary injection solution (see Theorem \ref{thm:existence}). 
Following the computations of \cite{KBN} we compute in the following an explicit solution to the multi-component equation \eqref{eq:disc_sta}.

Given a test function $\psi_\alpha$, the weak formulation is now given by 
\begin{equation}\label{eq:weak_dic_source}
0 =  \sum_{\alpha,\beta=1}^\infty [\psi_{\alpha+\beta} - \psi_\alpha - \psi_\beta ] n_\alpha n_\beta + \frac{h}{d}\sum_{|\alpha|=1} \psi_\alpha.
\end{equation}
The generating function
\begin{equation}\label{eq:F_oneComp_source}
F(z) = \sum_{|\alpha|=1}^\infty z^{|\alpha|}n_\alpha
\end{equation}
satisfies
\begin{equation}\label{eq:F_eq_source}
F(z)^2-2F(z)N+S(z)=0
\end{equation}
where $S(z) = \frac{h}{d}\sum_{|\alpha|=1} z^\alpha = \frac{h}{d} \sum_{i=1}^d z_i $ and $N=\sqrt{h}$ is obtained using an appropriate test function in \eqref{eq:weak_dic_source}. The solution to \eqref{eq:F_eq_source} reads
\begin{equation}
F(z) = \sqrt{h}[1-\sqrt{1- \frac{|z|}{d}}].
\end{equation}
The solution to \eqref{eq:disc_sta} is obtained by expanding $F$ in powers of the variables $z_i$ and comparing with \eqref{eq:F_oneComp_source}, yielding
\begin{equation}
n_\alpha= n_{|\alpha|} g(\alpha)
\end{equation}
where $g$ is defined in \eqref{eq:sol_multiD} and 
\begin{equation}
n_{|\alpha|} = \frac{\sqrt{h} (2|\alpha|)!}{(2|\alpha|-1)(2^{|\alpha|}|\alpha|!)^2}.
\end{equation}
For large sizes we may approximate $ n_{|\alpha|}$ by $\sqrt{h}|\alpha|^{-3/2}$, therefore using also \eqref{eq:approx_g}, we obtain
\begin{equation}\label{eq:n_approx_source}
n_\alpha \approx  \sqrt{h}|\alpha|^{-(d+2)/2} \exp(-\frac{|\alpha|_-^2}{2|\alpha|}).
\end{equation}
Like in the time-dependent problem, an additional size scale is observed $|\alpha|_- \sim \sqrt{|\alpha|}$.
Also here we can see from \eqref{eq:n_approx_source} that a stationary solution $n_\alpha$ reaches maximum values at the straight line defined by $\{\alpha \in \N_+^d\ |\ \alpha_1 = \alpha_2 = ... =\alpha_d\}$.
A representation of \eqref{eq:n_approx_source} is shown in Figure \ref{fig:localization}.

\begin{figure}
\center
\includegraphics[scale=0.4]{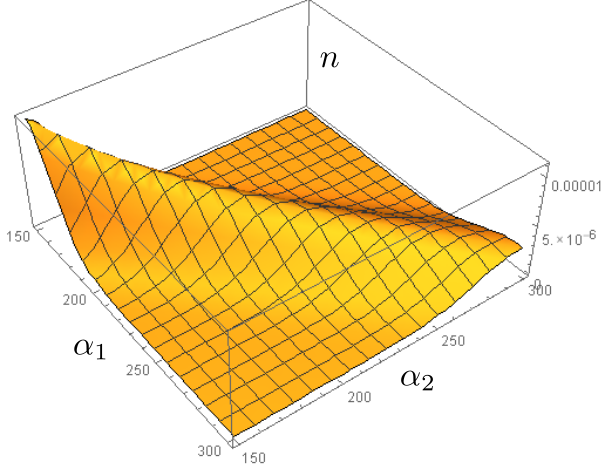}\label{fig:localization}
\caption{Approximation of a stationary solution to the two-component
coagulation equation with source and constant kernel \eqref{eq:disc_sta}. We observe a concentration of  particles  along a straight line.}
\end{figure} 

\bigskip

\section{Perspectives and open problems}\label{sec:perspectives}

The existence and uniqueness  of a time-dependent solution  have also been established in \cite{Norris} for coagulation kernels satisfying \eqref{eq:K_growth1}-\eqref{eq:K_growth2} with $\gamma+\lambda = -\lambda $ and $\lambda >- 1/2$  using a similar reasoning  as the one we presented in Section \ref{sec:exist}  \cite{Norris}.
Moreover, in \cite{EM06} existence is obtained using a functional framework, for  a class of kernels satisfying \eqref{eq:K_growth1}-\eqref{eq:K_growth2} with $c_1=c_2=1$, $\lambda \in [-1,1]$, $\gamma \in [0,2]$, $\gamma \leq -2\lambda$, $\gamma+\lambda \in [-1,1]$ and $(\gamma,\lambda)\neq (-\lambda,-1)$. In \cite{FL06} uniqueness is proved globally in time for a class of kernels satisfying some regularity conditions as well as the bounds \eqref{eq:K_growth1}-\eqref{eq:K_growth2} with $\gamma \leq 1$, $\lambda = 0$, and for a different class of kernels such that $\gamma \in (1,2]$, $\lambda = 1$ up to a gelation time $T$. 
For more general classes of kernels, both existence and uniqueness  remain  open problems.  We refer to the survey  \cite{LM} for further references.

In the  presence of a constant source of small particles, the existence and non-existence of  stationary solutions presented in Section \ref{sec:onecomponent} are the most recent existence results to the best of our knowledge. Previous results \cite{Dub} were obtained for particular classes of kernels that are included in the more general setting presented here.
In the case of multi-component equations with $d$ components, source and  kernel $K$ satisfying 
\begin{equation}\label{eq:K_multi}
c_1 w(x,y) \leq K(x,y) \leq c_2 w(x,y) \quad \text{ with } \quad w(x,y) = \sum_{i=1}^d x^{\gamma_i-\lambda_i}y^{\lambda_i}+y^{\gamma_i-\lambda_i}x^{\lambda_i},
\end{equation}
we expect  the existence  result (Theorem \ref{thm:existence}) to remain valid for a class of kernels satisfying $|\gamma_i+2\lambda_i| < 1$ for all $i=1,...,d$ \cite{FLNV}. In the same line, the nonexistence result (Theorem \ref{thm:NonExistence}) should hold true if $|\gamma_i+2\lambda_i| \geq 1$ for some $i$. Moreover, stationary solutions are expected to exhibit mass localization along a straight line for a  class of kernels satisfying growth bounds that are invariant under permutations of the components \cite{FLNV}. 

To the best of our knowledge, nothing is known about rigorous results for  multi-component coagulation equations with general kernels. However, the well-posedness results for the one-component case, are expected to remain true in the multi-component case provided the kernel satisfies the bounds \eqref{eq:K_multi}  
with $\gamma_i$ and $\lambda_i$ satisfying the same conditions for well-posedness in dimension $d=1$ for all $i$. The paper in preparation \cite{FLNV2} is a first step in this direction. Mass localization for large times is expected to hold for a class of kernels that satisfy the same bounds with the additional condition that $w$ is invariant under any permutation of the components.
 An additive kernel that does not satisfy this invariance was considered in \cite{YLS16}. In this case, the mass may not localize in a straight line, due to the different rates of coagulation of each component. Multiscale behaviour is then expected to emerge that could break down the nice localization structure.

Mass localization results are very important in the optimization of current algorithms as they allow to focus the computations on the region of the size space where the mass is localized. The computational complexity of the multi-component problem may in this way be reduced to the complexity of the one-component problem. 

There are also open problems on general coagulation equations with fragmentation, sink and growth terms, as well as  on the derivation of these equations from particle systems. We refer to \cite{C15} for a brief overview on some of these topics.

\bigskip 

\textbf{Acknowledgements.} 
The author is grateful to J. Lukkarinen, A. Nota and J. J. L. Velázquez for a very productive collaboration that led to the results presented in Sections \ref{sec:exist} and \ref{sec:non-exist}. 
The author acknowledges support of the Faculty of Science of University of Helsinki  through the Atmospheric Mathematics (AtMath) collaboration as well as of the Hausdorff Research Institute for Mathematics (Bonn), through the Junior Trimester Program on Kinetic Theory, of the CRC 1060 The mathematics of emergent effects at the University of Bonn funded through the German Science Foundation (DFG).

\newpage

\bigskip 

\bigskip

\end{document}